\newtheorem{thm}{Theorem}
\newtheorem{cor}[thm]{Corollary}
\newtheorem{lemma}[thm]{Lemma}
\newtheorem{prop}[thm]{Proposition}
\theoremstyle{definition}
\theoremstyle{remark}
\newtheorem{rem}{Remark}
\newtheorem{example}{Example}
\newtheorem{notation}{Notation}
\newcommand{\set}[1]{\left\{#1\right\}}
\newcommand{\PB}{\left\{\cdot\,,\cdot\right\}}
\newcommand{\pb}[1]{\left\{#1\right\}}
\newcommand{\LVPS}{\operatorname{LVPS}}
\begin{document}

\parskip 4pt
\baselineskip 16pt

\title[Poisson structures for difference equations]{Poisson structures for difference equations}

\author{C. A. Evripidou\textsuperscript{1}, G. R. W. Quispel\textsuperscript{1}}
\address{\textsuperscript{1}Department of Mathematics and Statistics, La Trobe University, Melbourne, Victoria 3086, Australia}
\email{C.Evripidou@latrobe.edu.au, R.Quispel@latrobe.edu.au}

\author{J. A. G. Roberts\textsuperscript{2}}
\address{\textsuperscript{2}School of Mathematics and Statistics UNSW Australia, Sydney NSW 2052, Australia}
\email{Jag.Roberts@unsw.edu.au}

\date{\today}

\begin{abstract}
We study the existence of log-canonical Poisson structures that
are preserved by difference equations of special form. We also
study the inverse problem, given a log-canonical Poisson structure
to find a difference equation preserving this structure. We give
examples of quadratic Poisson structures that arise for the
Kadomtsev-Petviashvili (KP) type maps which follow from a
travelling-wave reduction of the corresponding integrable
partial difference equation.
\end{abstract}

\maketitle

\section{Introduction}

Hamiltonian dynamical systems form a major area of study in dynamical
systems, both for their mathematical structure and because of their widespread applications
\cite{abrmar,goldstein}.
The form of the paradigmatic Hamiltonian system is
\begin{equation}
\label{eq:sympl_str}
\dot{\bf x} = \Omega \; \nabla H({\bf x})
\end{equation}
where ${\bf x} \in \mathbb R^n$ with $n$ even. The $n \times n$
constant matrix $\Omega$ is skew-symmetric and is the \textit{symplectic
structure} of the system. Typically,
\begin{equation}
\label{eq:sympl_matr}
\Omega=
\begin{pmatrix}
0& I\\
- I& 0
\end{pmatrix}
\end{equation}
where $0$ and $I$ denote, respectively, the zero and identity
matrix of dimension $\frac{n}{2}$. More generally, $\Omega$ in
\eqref{eq:sympl_str} can be any constant skew-symmetric matrix,
or more generally again, a non-constant skew symmetric matrix $\Omega({\bf x})$
which satisfies the Jacobi identity, in which case it is called a \textit{Poisson structure}.
The Poisson structure is called non-degenerate when
$$
\det(\Omega({\bf x}))\neq 0.
$$
These possibilities for $\Omega({\bf x})$ can also be taken for \eqref{eq:sympl_str}
in the case that the dimension is odd in which
case $\Omega({\bf x})$ is degenerate because it is skew-symmetric and odd dimensional.

Equations \eqref{eq:sympl_str} can be written in terms of the Poisson bracket $\PB$
defined by $\pb{x_i,x_j}({\bf x}):=\Omega_{i,j}({\bf x})$ or, for any functions $f, g$, by
\begin{equation}
\label{eq:pb_rel}
\pb{f({\bf x}),g({\bf x})}({\bf x}):=\sum_{1\leq i,j \leq n}\frac{\partial f}{\partial x_i}
({\bf x})\frac{\partial g}{\partial x_j}({\bf x})\pb{x_i,x_j}({\bf x})=\nabla f({\bf x})^t\Omega({\bf x})
\nabla g({\bf x}).
\end{equation}
Then equations \eqref{eq:sympl_str} become
$$
\dot{x_i}=\pb{x_i,H}({\bf x}).
$$
The existence of a symplectic structure or more generally of a Poisson
structure plays a key role in the geometry of \eqref{eq:sympl_str}.
Darboux's theorem says that any system \eqref{eq:sympl_str} with arbitrary
non-degenerate Poisson matrix $\Omega({\bf x})$ can be transformed locally to the
Hamiltonian form with the canonical $\Omega$ of \eqref{eq:sympl_matr}.
However, many systems arise naturally with a non-canonical $\Omega({\bf x})$
and are analyzed in that given coordinate system (e.g. for geometric numerical
integration, where the system is numerically approximated with the symplectic
structure not converted to canonical form).

In discrete time, a map
\begin{equation}
\label{eq:map}
M : {\bf x} \mapsto {\bf x'}:=M({\bf x})
\end{equation}
preserves a Poisson structure $\Omega({\bf x})$ if its Jacobian matrix $dM(\bf{x})$, satisfies 
$$
dM^t({\bf x})\; \Omega({\bf x})\; dM ({\bf x}) = \Omega({\bf x}').
$$
Equivalently if for any two functions $f,g$ on $\mathbb R^n$,
\begin{equation}
\label{eq:pois_pres}
\pb{f\circ M,g\circ M}({\bf x})=\pb{f,g}(M({\bf x}))
\end{equation}
which, by using the notation $G\circ M({\bf x})=G'$ for any function $G$,
can be shortened into $\pb{f',g'}=\pb{f,g}'$. Using \eqref{eq:pb_rel}
this is equivalent to $\pb{x_i',x_j'}=\pb{x_i,x_j}'$ for all $1\leq i< j \leq n$.
As in the continuous case, the existence of the Poisson structure
plays a key role in the geometry of \eqref{eq:map} (see \cite{ves}).
Recall that the flow
or map in $2m$ degrees of freedom satisfies Liouville-Arnol'd integrability if
there exist $m$ functionally independent integrals of motion
$\set{I_1 , I_2 , \ldots , I_m }$, in involution with respect to the Poisson
structure, i.e. satisfying $\pb{I_i,I_j}=0$. Clearly, establishing this type of
integrability requires knowing the Poisson structure to begin with.

In this paper we study the problem of finding a Poisson structure $\PB$
that is preserved by a difference equation of order $n$ of the form
\begin{equation}
\label{eq:diff_equat}
x_{n}=F({\bf x}):=F(x_0,x_1,\ldots,x_{n-1}).
\end{equation}
By saying that the Poisson structure $\PB$ is preserved by the
difference equation \eqref{eq:diff_equat} we mean that the map
\begin{equation}
\label{eq:map_gener}
M(x_0,x_1,\ldots,x_{n-1})=(x_0',x_1',\ldots,x_{n-1}')=:\bf x'
\end{equation}
where
\begin{align}
\begin{split}
\label{eq:def_of_M}
x_i'=x_{i+1}\quad \text{ for } i=0,1,\ldots, n-1,\quad \text{and}\quad
x_n=F({\bf x})
\end{split}
\end{align}
is a Poisson map. By now, many authors have studied similar problems from
the point of view of cluster algebras \cite{hone_1, hone_2}, r-matrix approach
\cite{kouloukas_1}, using three leg forms for $(p,p)$ reductions of maps in the
ABS list \cite{ABS}, by considering symplectic structures \cite{iatrou}
and many other \cite{brst,papa_1, emmrich, faddeev, dihn_4, dihn_1, papa_2, dihn_3, dihn_2}.

We will consider two families of difference equations of the form
\eqref{eq:diff_equat} with
\begin{equation}
\label{eq:first_form}
F({\bf x})=
\phi(y_1,y_2,\ldots,y_k)
\end{equation}
where
$$
y_i={\bf r}_i\cdot{\bf x}=\sum_{j=0}^{n-1}r_{i,j}x_j
$$
is the dot product of ${\bf x}$ and
${\bf r}_i = (r_{i,0},r_{i,1},\ldots, r_{i,n-1}) \in \mathbb R^n$ and
\begin{equation}
\label{eq:second_form}
F({\bf x})=
\psi(z_1,z_2,\ldots,z_k)
\end{equation}
where
$$
z_i={\bf x}^{{\bf r}_i}=x_0^{r_{i,0}}x_1^{r_{i,1}}\cdots x_{n-1}^{r_{i,n-1}},
$$
for functions $\phi, \psi\in C^1(\mathbb R^k)$.
Some particular choices of the functions $\phi$ and $\psi$ give rise to several well
known maps such as the Sine-Gordon (SG), Korteweg-de Vries (KdV) ,
modified KdV (mKdV), potential KdV (pKdV), AKP and BKP reductions \cite{hone_3, hone_2, dihn_4, papa_2, dihn_3, dihn_2}
as Table \ref{tbl:table1} shows. At the end of Section
\ref{sec:map_to_poisson} we study the maps presented in
Table \ref{tbl:table1} and we provide Poisson structures
that are preserved by them. For simplicity we adopt the following notation.

\begin{notation}
\label{not:bar}
The bar over a sequence of numbers means that the sequence
is repeated and the number of times is repeated will follow
from the order of the corresponding map.
For the KdV map in Table \ref{tbl:table1}
below the vector ${\bf r}_1$ contains only $-1$'s and the
vector ${\bf r}_2$ has zero in its first two and the last
element and the rest are $1$'s.
\end{notation}

\begin{center}
\begin{table}[ht]
  \caption{Special cases of the function $F$ that are related to known maps}
  \label{tbl:table1}
  \begin{tabular}{ | c | c | }
    \hline
     Form of the function $F$ & Related map \\ \hhline{|=|=|}
      \multirow{2}{*}{ \quad \; $F({\bf x})=\phi(y_1,y_2)={\bf r}_1\cdot
      {\bf x}+\frac{p_1{\bf r}_2\cdot{\bf x}+q_1}{p_2{\bf r}_2\cdot{\bf x}+q_2}$ \quad \; } &
      \hspace{-2ex} \;  KdV: \hspace{2ex} ${\bf r}_1 =(\overline{-1})$ \\ &
      \hspace{8.5ex} ${\bf r}_2 =(0,0,\overline{1},0)$                      \\ \hline
      \multirow{2}{*}{ \quad \; $F({\bf x})=\phi(y_1,y_2)={\bf r}_1\cdot{\bf x}+
      \frac{p}{{\bf r}_2\cdot{\bf x}}$ \quad \; } &
      \quad \; \hspace{-5ex} pKdV: \quad ${\bf r}_1 =(1,\overline{0})$   \\ &
      \hspace{10.5ex} ${\bf r}_2 =(0,-1,\overline{0},1)$                     \\ \hline
      \multirow{2}{*}{ \quad \; $F({\bf x})=\psi(z_1,z_2)={\bf x}^{{\bf r}_1}
      \frac{p_1{\bf x}^{{\bf r}_2}+q_1}{p_2{\bf x}^{{\bf r}_2}+q_2}$ \quad \; } &
      \quad \; \hspace{-5ex} mKdV: \hspace{2ex} ${\bf r}_1 =(1,\overline{0})$ \\ &
      \hspace{10.5ex} ${\bf r}_2 =(0,-1,\overline{0},1)$                          \\ \hline
      \multirow{2}{*}{ \quad \; $F({\bf x})=\psi(z_1,z_2)=
      {\bf x}^{{\bf r}_1}\frac{p_1{\bf x}^{{\bf r}_2}+q_1}{p_2{\bf x}^{{\bf r}_2}+q_2}$ \quad \; } &
      \quad \;  SG: \hspace{2ex} ${\bf r}_1 =(-1,\overline{0})$ \\        &
      \hspace{8.5ex} $ {\bf r}_2 =(0,1,\overline{0},1)$             \\ \hline
      \multirow{2}{*}{ \quad \; $F({\bf x})=\psi(z_1,z_2)={\bf x}^{{\bf r}_1}\left(p_1+p_2{\bf x}^{{\bf r}_2}\right)$ \quad \;}  &
      \multirow{2}{*}{ \quad \;  AKP:  \hspace{1ex} formula \eqref{eq1} } \\ & \\ \hline
      \multirow{2}{*}{ \quad \; $F({\bf x})=\psi(z_1,z_2,z_3)={\bf x}^{{\bf r}_1}\left(p_1+p_2{\bf x}^{{\bf r}_2}+p_3{\bf x}^{{\bf r}_3}\right)$ \quad \;}  &
      \multirow{2}{*}{\quad \;  BKP:  \hspace{1ex} formula \eqref{eq2} } \\ & \\ \hline
  \end{tabular}
\end{table}
\end{center}

We will consider two families of Poisson structures that are each defined by
a constant skew-symmetric matrix $T$. Constant Poisson structures defined by
\begin{equation}
\label{eq:c_ps}
\pb{x_i,x_j}=\Omega_{i,j}({\bf x})=T_{i,j}, \quad i,j\in\set{0,1,\ldots,n-1}
\end{equation}
and quadratic Poisson structures which are known as log-canonical
(or diagonal, or Lotka-Volterra) Poisson structures \cite{dufour}. These are defined by the brackets
\begin{equation}
\label{eq:lv_ps}
\pb{x_i,x_j}=\Omega_{i,j}({\bf x})=T_{i,j}x_ix_j, \quad i,j\in\set{0,1,\ldots,n-1}.
\end{equation}
It is well-known that, because of the skew-symmetry of $T$, such brackets always satisfy the Jacobi identity, hence they
are indeed Poisson brackets \cite{polpoisson}. The rank of these Poisson structures,
at a generic point, equals the rank of the constant matrix $T$ and their Casimirs are in
correspondence with the null-vectors of $T$. If ${\bf k}=(k_0,k_1,\ldots,k_{n-1})$ is a
null-vector of $T$ then ${\bf k \cdot x}$ is a Casimir of the constant Poisson structure
while $\bf x^k$ is a Casimir of the quadratic Poisson structure. The log-canonical term
comes from the fact that these quadratic structures are related to the constant Poisson
structures by exponentiation of the coordinates. If we define new variables $\eta_i=e^{x_i}$,
where the $x_i$ variables satisfy \eqref{eq:c_ps}, then the bracket of $\eta_i$ and $\eta_j$ is
$$
\pb{\eta_i,\eta_j}=T_{i,j}\eta_i\eta_j.
$$
Because of this relation and because the examples we give in Section \ref{sec:ex}
are of the form \eqref{eq:second_form}, in what follows we will focus on difference equations of the
form \eqref{eq:second_form} that preserve quadratic Poisson structures of the form
\eqref{eq:lv_ps}. The explicit relation between maps of the form \eqref{eq:first_form}
and \eqref{eq:second_form} is given in the following lemma.
\begin{lemma}
\label{lem:rel_of_two_forms}
Suppose that $M:\mathbb{R}^n\rightarrow\mathbb{R}^n, {\bf x} \mapsto {\bf x}'$ is a map of
the form \eqref{eq:map_gener}-\eqref{eq:def_of_M} where $F({\bf x})=\phi(y_1,y_2,\ldots,y_k)$
as in \eqref{eq:first_form}. Then $M$ preserves
the constant Poisson structure $\pb{x_i,x_j}=T_{i,j}$ if and only if the map
$L:\mathbb{R}^n_{>0}\rightarrow\mathbb{R}^n_{>0}$ defined by
${\bf v}=(v_0,v_1,\ldots,v_{n-1}) \mapsto {\bf v}'=(v_1,v_2,\ldots,v_{n-1},G({\bf v}))$
with $G(v_0,v_1,\ldots,v_{n-1})=e^{F({\bf x})}$ and $v_i=e^{x_i}$
preserves the quadratic Poisson structure $\pb{v_i,v_j}=T_{i,j}v_iv_j$.
The map $L$ is of the form
\eqref{eq:map_gener}-\eqref{eq:def_of_M} with $G$ as in \eqref{eq:second_form}.
\end{lemma}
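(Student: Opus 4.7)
The plan is to view the map $L$ as the conjugate of $M$ under the coordinate change $E\colon\mathbb{R}^n\to\mathbb{R}^n_{>0}$, $E(\mathbf{x})=(e^{x_0},\dots,e^{x_{n-1}})$, and to invoke the fact (noted in the paragraph preceding the lemma) that $E$ is itself a Poisson isomorphism between the constant bracket and the log-canonical bracket attached to the same matrix $T$. Poisson preservation will then transfer automatically by conjugation.

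First I would verify the conjugacy $L\circ E=E\circ M$: for $i<n-1$ both sides have $i$th coordinate $e^{x_{i+1}}$, and the last coordinate equals $G(\mathbf{v})=e^{F(\mathbf{x})}$ by the very definition of $G$. Since $\mathbf{v}^{\mathbf{r}_i}=e^{\mathbf{r}_i\cdot\mathbf{x}}=e^{y_i}$, the identity $G(\mathbf{v})=e^{\phi(y_1,\dots,y_k)}$ rewrites as $G(\mathbf{v})=\psi(z_1,\dots,z_k)$ with $\psi(t_1,\dots,t_k):=\exp(\phi(\log t_1,\dots,\log t_k))$; hence $L$ has the form \eqref{eq:map_gener}-\eqref{eq:def_of_M} with $G$ of type \eqref{eq:second_form}. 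The Poisson-intertwining property of $E$ is the one-line pullback computation $\pb{v_i\circ E,v_j\circ E}=\pb{e^{x_i},e^{x_j}}=e^{x_i}e^{x_j}T_{i,j}=(T_{i,j}v_iv_j)\circ E$, which is exactly the displayed identity appearing just above the lemma.

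With $L=E\circ M\circ E^{-1}$ and $E$ a Poisson isomorphism, the equivalence claimed in the lemma follows by a formal chase of the definition \eqref{eq:pois_pres}. The log-canonical preservation identity $\pb{f\circ L,g\circ L}=\pb{f,g}\circ L$ for $L$, after precomposition with $E$ and two applications of the intertwining, becomes the constant-bracket identity $\pb{(f\circ E)\circ M,(g\circ E)\circ M}=\pb{f\circ E,g\circ E}\circ M$ for $M$ applied to $f\circ E$ and $g\circ E$; the bijection $f\mapsto f\circ E$ between $C^1(\mathbb{R}^n_{>0})$ and $C^1(\mathbb{R}^n)$ then yields the \emph{if and only if}. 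The only obstacle is bookkeeping of the compositions; there is no serious calculation, since the exponential substitution has been tailored so that the Jacobian factors $v_iv_j$ precisely match the log-canonical Poisson tensor.
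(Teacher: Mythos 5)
Your proof is correct, but it is organized differently from the paper's. The paper proves the lemma by brute force: it reduces the Poisson condition to the single bracket $\pb{v_i,v_{n-1}}$ (the only one not automatic for a shift-type map), and then runs a chain-rule computation $T_{i,n-1}v_{i+1}G = \sum_j T_{i+1,j}v_{i+1}G\,\partial F/\partial x_j = \sum_j T_{i+1,j}v_{i+1}v_j\,\partial G/\partial v_j = \pb{v_i',v_{n-1}'}$, using the hypothesis on $M$ in the first step and $\partial G/\partial v_j = G\,\partial F/\partial v_j\cdot$(nothing more than $G=e^F$, $v_j=e^{x_j}$) in the rest; the converse is dismissed as ``done similarly.'' You instead isolate the structural content: $E(\mathbf{x})=(e^{x_0},\dots,e^{x_{n-1}})$ intertwines the two maps ($L\circ E=E\circ M$) and is a Poisson diffeomorphism from the constant bracket to the log-canonical one (the one-line coordinate computation, which extends to all functions via \eqref{eq:pb_rel}), so the equivalence follows by conjugation. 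Your route buys the ``if and only if'' in one stroke rather than by a symmetric second computation, and it generalizes verbatim to the situation of the Remark following the lemma (replace $e^x$ by any $h$ with nonvanishing derivative); the paper's computation is more self-contained and makes the role of the specific identity $x_j\,\partial z_\ell/\partial x_j=r_{\ell,j}z_\ell$ visible, which is reused later in Lemma \ref{lem:gen_form}. Your verification that $G$ has the form \eqref{eq:second_form} via $\psi(t_1,\dots,t_k)=\exp(\phi(\log t_1,\dots,\log t_k))$ coincides with the paper's. No gaps.
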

\begin{proof}
By definition, $G$ is indeed of the form \eqref{eq:second_form} and is
defined by the function $\psi(z_1,z_2,\ldots,z_k)=e^{\phi(\ln z_1,\ln z_2,\ldots,\ln z_k)}$.
To verify that $L$ preserves the Poisson structure $\pb{v_i,v_j}=T_{i,j}v_iv_j$ we only need
to verify \eqref{eq:pois_pres} for $f=v_i, i=0,1,\ldots,n-2$ and $g=v_{n-1}$. We have,
for any $i=0,1,\ldots,n-2$,
\begin{align*}
\pb{v_{i},v_{n-1}}'=&
T_{i,n-1}v_{i+1}G({\bf v})=\sum_{j=0}^{n-1}T_{i+1,j}v_{i+1}G({\bf v})
\frac{\partial F({\bf x})}{\partial x_j}\\
=&\sum_{j=0}^{n-1}T_{i+1,j}v_{i+1}v_jG({\bf v})
\frac{\partial F({\bf x})}{\partial v_j}
\\
=&\sum_{j=0}^{n-1}T_{i+1,j}v_{i+1}v_j\frac{\partial G({\bf v})}{\partial v_j}=
\pb{v_i',v_{n-1}'},
\end{align*}
where in the second equality we have used our assumption that the map $M$
preserves the constant Poisson structure $(T_{i,j})$. The proof of the
other direction is done similarly.
\end{proof}
\begin{rem}
The previous lemma allows us to present our results only for maps of the
form \eqref{eq:second_form} and for quadratic Poisson structures. Then
the same results will hold true for maps of the form \eqref{eq:first_form}
and constant Poisson structures. One can prove a more general result to
cover a larger class of mappings and Poisson structures. Namely, under the
assumptions of the previous lemma, if $h:\mathbb R\rightarrow X\subseteq \mathbb R$
is any differentiable function with $h'(x)\neq 0$ for all $x\in\mathbb R$,
then the brackets $\pb{v_i,v_j}=T_{i,j}h'(h^{-1}(v_i))h'(h^{-1}(v_j))$ define
a Poisson structure on $X^n$ that is preserved by the map
$L:X^n\rightarrow X^n, {\bf v}=(v_0,v_1,\ldots,v_{n-1}) \mapsto
{\bf v}'=(v_1,v_2,\ldots,v_{n-1},G({\bf v}))$
with $G(v_0,v_1,\ldots,v_{n-1})=h(F({\bf x}))$ and $v_i=h(x_i)$.
\end{rem}

In Section \ref{sec:map_to_poisson} we show
that under some assumptions on the function $\psi$ we can always find
a quadratic Poisson structure of the form \eqref{eq:lv_ps} that is preserved by the
map \eqref{eq:map_gener}-\eqref{eq:def_of_M} [Theorem \ref{prop:pois_preserv}].
In Section \ref{sec:pois_to_map} we study the inverse
problem; given a (log-canonical) quadratic Poisson structure to find a map of
the form \eqref{eq:map_gener}-\eqref{eq:def_of_M} that preserves this structure
[Theorems \ref{prop:inv_prob_symm_even}, \ref{prop:inv_prob_symm_odd}].
Last, in Section \ref{sec:ex} we apply our theory to maps which are obtained
as reductions of known partial difference equations.

\section{Finding the Poisson structure given the difference equation}
\label{sec:map_to_poisson}

We begin this section by showing that if a Poisson structure is preserved
by a map of the form \eqref{eq:map_gener}-\eqref{eq:def_of_M}, then it must be of a specific
form and the function $F$ must satisfy certain PDE's that depend on the Poisson
structure. We show in the next lemma that if $F$ is of the form \eqref{eq:second_form} and
the Poisson structure is of the form \eqref{eq:lv_ps} with Toeplitz matrix $T$
then the PDE's are transformed into a linear system of equations.
Recall that a square $n\times n$ matrix $T$ is called Toeplitz if its entries
$T_{i,j}$ depend only on the differences $j-i$. This means that if $T$ is Toeplitz
there exist $2n-1$ numbers $T_{j}$ for $j=-n+1, -n+1, \ldots, n-1$ such that for any
$i,j\in\set{1,2, \ldots, n}, \; T_{i,j}=T_{j-i}$. Notice that in the case of a skew symmetric
Toeplitz matrix $T$, which is the case we consider, the numbers $T_j$ have the property
$T_{-j}=-T_j$, in particular $T_0=0$.
\begin{lemma}
\label{lem:gen_form}
Let $M$ be the map \eqref{eq:map_gener}-\eqref{eq:def_of_M}
and $\PB$ a Poisson structure with $\Omega({\bf x})$
the corresponding Poisson matrix.
\begin{enumerate}
\item
The map $M$ preserves the Poisson structure $\PB$ if and only if the following
two relations are satisfied
\begin{equation*}
\Omega_{i+1,j+1}({\bf x})=\Omega_{i,j}({\bf x'}):=\Omega_{i,j}(M({\bf x})),
\quad \text{ for all } \quad 0\leq i<j<n-1
\end{equation*}
and
\begin{gather*}
\begin{split}
\Omega_{i,n-1}({\bf x'})=\pb{x_{i+1},x_{n-1}'}:=\sum_{j=0}^{n-1}
\frac{\partial F}{\partial x_j}\Omega_{i+1,j}({\bf x}),
\quad \text{ for } \quad i=0,1,\ldots,n-2.
\end{split}
\end{gather*}
\item
If the function $F$ of \eqref{eq:second_form} is of the form $F({\bf x})=z_1\tilde{\psi}(z_2,\ldots,z_k)$
with $\tilde{\psi}$ any function in $C^1(\mathbb R^{k-1})$ and if $\PB$ is a quadratic
Poisson structure of the form \eqref{eq:lv_ps} with Toeplitz matrix
$T$, then $M$ preserves $\PB$ if and only if
\begin{gather}
\begin{split}
\label{eq:linear_system_2}
\sum_{j=0}^{n-1}r_{1,j}T_{j-i}=T_{n-i},& \quad \text{for}\quad i=1,\ldots, n-1\quad \text{and}\\
\sum_{j=0}^{n-1}r_{\ell,j}T_{j-i}=0,\quad \text{for}\quad  i=1,&\ldots, n-1, \quad
\ell=2,3,\ldots,k.
\end{split}
\end{gather}
\end{enumerate}
\end{lemma}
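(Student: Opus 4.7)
The plan is to prove (1) by directly unpacking the coordinate form of the preservation condition \eqref{eq:pois_pres}, and then to prove (2) by exploiting the multiplicative structure of $F$ together with the Toeplitz property of $T$ to reduce the resulting PDE to a linear algebraic system.

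For part (1), I would start from the equivalent formulation $\pb{x_i',x_j'}=\pb{x_i,x_j}'$ that is noted immediately after \eqref{eq:pois_pres}, which must hold for all $0\leq i<j\leq n-1$. Using $x_i'=x_{i+1}$ for $i<n-1$ and $x_{n-1}'=F(\mathbf{x})$, I would split into two cases. When $j<n-1$ both $x_i'$ and $x_j'$ are shifted coordinates, so the left-hand side is $\Omega_{i+1,j+1}(\mathbf{x})$ and the right-hand side is $\Omega_{i,j}(\mathbf{x}')$, yielding the first relation. When $j=n-1$, Leibniz applied in the second slot via \eqref{eq:pb_rel} gives $\pb{x_{i+1},F(\mathbf{x})}=\sum_{j}\partial_{x_j}F\cdot\Omega_{i+1,j}(\mathbf{x})$, and equating this with $\Omega_{i,n-1}(\mathbf{x}')$ yields the second relation. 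The converse is the same computation read in reverse.

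For part (2), the key initial observation is that with $\Omega_{i,j}(\mathbf{x})=T_{j-i}x_ix_j$ the first relation from part (1) becomes $T_{j-i}x_{i+1}x_{j+1}=T_{j-i}x_{i+1}x_{j+1}$, which is an identity precisely because $T$ is Toeplitz. So only the second relation has content. Using $F=z_1\tilde\psi(z_2,\ldots,z_k)$ and $\partial z_\ell/\partial x_j = r_{\ell,j}z_\ell/x_j$, the chain rule gives
\[
\frac{\partial F}{\partial x_j}=\frac{r_{1,j}}{x_j}F+\sum_{\ell=2}^{k}\frac{r_{\ell,j}}{x_j}A_\ell,\qquad A_\ell:=z_1 z_\ell\,\partial_{z_\ell}\tilde\psi.
\]
Plugging this into the second relation and using $\Omega_{i+1,j}(\mathbf{x})=T_{j-i-1}x_{i+1}x_j$, the factor $x_j$ cancels, and dividing by $x_{i+1}$ leaves, for $i=0,\ldots,n-2$,
\[
T_{n-1-i}\,F=\Big(\sum_{j=0}^{n-1}r_{1,j}T_{j-i-1}\Big)F+\sum_{\ell=2}^{k}\Big(\sum_{j=0}^{n-1}r_{\ell,j}T_{j-i-1}\Big)A_\ell.
\]

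To conclude, I would match coefficients of the functionally independent objects $F$ and $A_2,\ldots,A_k$ to obtain $\sum_j r_{1,j}T_{j-i-1}=T_{n-1-i}$ and $\sum_j r_{\ell,j}T_{j-i-1}=0$ for $\ell\geq 2$; the index shift $i\mapsto i-1$ gives exactly \eqref{eq:linear_system_2}. The main obstacle is this final independence step: since $\tilde\psi$ is an arbitrary $C^1$ function, the values of $\tilde\psi$ and its partial derivatives $\partial_{z_\ell}\tilde\psi$ at a generic point can be prescribed independently, so the displayed identity forces each bracketed coefficient to vanish (or match $T_{n-1-i}$) separately. This argument needs to be stated with a little care, but it is the natural way to extract the linear system from a single scalar identity.
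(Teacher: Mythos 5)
Your proof is correct and follows essentially the same route as the paper: part (1) is the direct coordinate unpacking of \eqref{eq:pois_pres} into the two cases $j<n-1$ and $j=n-1$, and part (2) uses the Toeplitz property to make the first set of relations automatic and the identity $x_j\,\partial z_\ell/\partial x_j=r_{\ell,j}z_\ell$ to reduce the second set to \eqref{eq:linear_system_2}. The one place you go beyond the paper is the final coefficient-matching step, which the paper asserts without comment; your remark that the ``only if'' direction requires the functional independence of $F$ and the $A_\ell$ (equivalently, that the equivalence must be read as quantified over arbitrary $\tilde\psi$, since for special choices such as constant $\tilde\psi$ or $\tilde\psi(z_2)=z_2$ the single scalar identity does not force the separate equations) is a genuine and worthwhile clarification rather than a flaw.
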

\begin{proof}
Item (1) is easily proved by direct computation using formula \eqref{eq:pois_pres}.
For the proof of item (2) notice that, because $T$ is Toeplitz the first system of equations
of item (1) is automatically satisfied while the second one becomes
\begin{gather}
\sum_{j=0}^{n-1}\frac{\partial F}{\partial x_j}x_{i+1}x_jT_{j-i-1}=
x_{i+1}x_nT_{n-i-1} \nonumber  \iff \\ \label{eq:system_function}
\sum_{\ell=1}^k\sum_{j=0}^{n-1}\frac{\partial F}{\partial z_\ell}
\frac{\partial z_\ell}{\partial x_j} x_jT_{j-i-1}=F({\bf x})T_{n-i-1},
\quad i=0,1,\ldots,n-2.
\end{gather}
Using that $F({\bf x})=z_1\tilde{\psi}(z_2,\ldots,z_k)$ and that
$x_j\frac{\partial z_\ell}{\partial x_j}=r_{\ell,j}z_{\ell}$,
system \eqref{eq:system_function} is transformed into the
second part \eqref{eq:linear_system_2}.
\end{proof}

The linear system \eqref{eq:linear_system_2} has $k\cdot (n-1)$ equations
and $n-1$ variables, therefore is unlikely to have a solution.
Imposing some restrictions on the vectors
${\bf r}_i$ we are able to reduce the size of the system and in some
cases obtain general (non)existence results.
We do that in the next lemma after introducing some notation.

\begin{notation}
\label{not:hankel_matrix}
If ${\bf r}$ is any row vector we write ${\bf r}^*$ for the vector
obtained from ${\bf r}$ by deleting its first element and we write
$\hat{\bf r}$ for the vector obtained by reversing the order of the
entries of $\bf r$. We say that the vector $\bf r$ is symmetric if
${\bf r}=\hat{\bf r}$ and that is skew-symmetric if ${\bf r}=-\hat{\bf r}$.
We will also write $T^*$ for the $(n-1)\times (n-1)$ minor of $T$
obtained by deleting its first row and column and $Q$ for the
$(n-1)\times n$ minor of $T$ obtained by deleting its first row.
The $n\times n$ Hankel matrix $J$, defined by $J_{i,n+1-i}=1$ for all
$i=1,2,\ldots, n$, and all other entries zero, will be useful.
\end{notation}
Using the above notation a Toeplitz $n\times n$ matrix
$T$ is skew-symmetric (symmetric) if and only if
$JTJ=-T\; (JTJ=T)$. For example,
it is easy to see that $J^2=I$
and for the skew-symmetric matrix $\Omega$
in \eqref{eq:sympl_matr}, $J\Omega J=-\Omega$.
Similarly, the vector ${\bf r}$ is skew-symmetric (symmetric) if and only if
$J{\bf r}=-{\bf r}$ ($J{\bf r}={\bf r}$).

With the above notation the linear system \eqref{eq:linear_system_2} is written, in an equivalent
matrix form, as
\begin{gather}
\label{eq:linear_system_2_inv_matrix}
\begin{align}
-r_{1,0}\,{\bf t}+&T^*\,{\bf r}_1^{*t}= \hat{{\bf t}},\qquad
-r_{\ell,0}\,{\bf t}+T^*{\bf r}_\ell^{*t}= 0,\quad \ell=2,3,\ldots,k,
\end{align}
\end{gather}
or equivalently again, as
\begin{gather}
\label{eq:linear_system_2_inv_matrix_new}
\begin{split}
Q\,{\bf r}_1^t=
\hat{\bf t}, \quad
Q\,{\bf r}_\ell^t=0,\quad \ell=2,3,\ldots,k\,,
\end{split}
\end{gather}

where ${\bf t}$ is the vector
$$
{\bf t}^t=
\begin{pmatrix}
T_1 &
T_2 &
\cdots &
T_{n-1}
\end{pmatrix}\,.
$$

\begin{lemma}\
\label{lem:simple_form_palindr}
\begin{enumerate}
\item If $r_{1,0}=-1$ (resp. $r_{1,0}=1$) and $ r_{\ell,0}=0$ for $\ell=2,3,\ldots, n$ and if the vectors
${\bf r}_\ell^*$ are symmetric (resp. skew-symmetric) for all $\ell=1,2,\ldots, k$,
then the system \eqref{eq:linear_system_2_inv_matrix} becomes half in size. More
explicitly, for any $\ell\in\set{1,2,\ldots, k}$, the vector $T^*{\bf r}_\ell^{*t}$ is skew-symmetric
(resp. symmetric).
\item
If $n$ is even, $k=2,\; r_{1,0}=-1,\; r_{2,0}=0$ and if the vector
${\bf r}_\ell^*$ is symmetric for $\ell=1,2$ then the linear system
\eqref{eq:linear_system_2_inv_matrix} has a non-trivial solution.
\item
If ${\bf r}_{\ell}=(r_\ell,r_\ell,\ldots,r_\ell)$ for some $r_\ell\in\mathbb R$
then the $\ell$-th equations
of \eqref{eq:linear_system_2_inv_matrix} simplify to
$$
T_i(r_\ell+1)+\sum_{j=i+1}^{n-i-1}r_\ell T_j-T_{n-i}=0,
\; i=1,2,\ldots,[\frac{n}{2}]\,.
$$
\end{enumerate}
\end{lemma}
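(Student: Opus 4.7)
The plan is to exploit the Hankel reversal matrix $J^*$ together with the fact that the $(n-1) \times (n-1)$ minor $T^*$ of $T$ inherits the skew-symmetric Toeplitz property $J^* T^* J^* = -T^*$, so that symmetry of ${\bf r}_\ell^*$ translates into skew-symmetry of $T^* {\bf r}_\ell^{*t}$, and vice versa.

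For part (1), if ${\bf r}_\ell^*$ is symmetric, so that $J^* {\bf r}_\ell^{*t} = {\bf r}_\ell^{*t}$, the direct computation $J^*(T^* {\bf r}_\ell^{*t}) = (J^* T^* J^*)(J^* {\bf r}_\ell^{*t}) = -T^* {\bf r}_\ell^{*t}$ shows that $T^* {\bf r}_\ell^{*t}$ is skew-symmetric; the skew-symmetric case is analogous and produces a symmetric vector. The halving of the system then follows because the right-hand sides $\hat{\bf t} - {\bf t}$ (when $r_{1,0}=-1$) and $0$ (for $\ell \geq 2$) are likewise skew-symmetric, so each vector equation reduces to at most $\lfloor (n-1)/2 \rfloor$ independent scalar equations.

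For part (2), I would specialise to $n$ even, so that $n-1$ is odd. A skew-symmetric vector of odd length $n-1$ has a vanishing middle entry, leaving $(n-2)/2$ independent entries. By part (1), each of the two equations (for $\ell = 1$ and $\ell = 2$) therefore imposes at most $(n-2)/2$ linear conditions on the $n-1$ unknowns $T_1, \ldots, T_{n-1}$, giving at most $n-2$ conditions altogether. Since these conditions are homogeneous in the $T_j$'s, the underdetermined linear system admits a non-trivial solution.

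For part (3), I would substitute ${\bf r}_\ell = (r_\ell, \ldots, r_\ell)$ directly into the $\ell$-th equation from \eqref{eq:linear_system_2}, reducing the $i$-th scalar equation to $r_\ell \sum_{j=0}^{n-1} T_{j-i} = T_{n-i}$ (for $\ell = 1$) or $r_\ell \sum_{j=0}^{n-1} T_{j-i} = 0$ (for $\ell \geq 2$). I would then use $T_{-k} = -T_k$ and $T_0 = 0$ to cancel negative-index contributions against positive-index ones, leaving a partial sum over the interior indices, and separate the boundary terms $T_i$ and $T_{n-i}$ to recover the claimed identity. The range $i = 1, \ldots, \lfloor n/2 \rfloor$ is enough because the remaining equations are linearly dependent on these via the same skew-symmetry. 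The main difficulty I anticipate lies here, in the bookkeeping at the boundary index $i = \lfloor n/2 \rfloor$, where the summation range $[i+1,n-i-1]$ shrinks (or becomes empty for $n$ even) and the boundary terms $T_i$ and $T_{n-i}$ either coincide with one another or overlap with the summation range; parts (1) and (2), by contrast, are essentially direct consequences of the skew-symmetric Toeplitz structure.
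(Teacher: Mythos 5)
Your arguments for items (1) and (2) are correct and essentially identical to the paper's proof: the same conjugation identity $J\,T^*\,{\bf r}_\ell^{*t}=-J^2\,T^*\,J\,{\bf r}_\ell^{*t}=-T^*\,{\bf r}_\ell^{*t}$, the same observation that the right-hand sides $\hat{\bf t}+r_{1,0}{\bf t}$ and $0$ share the (skew-)symmetry, and the same count of $2(\tfrac n2-1)=n-2$ homogeneous conditions on the $n-1$ unknowns $T_1,\dots,T_{n-1}$.

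Item (3) is where your sketch does not close. Substituting ${\bf r}_\ell=(r_\ell,\dots,r_\ell)$ into \eqref{eq:linear_system_2} and using $T_{-m}=-T_m$, $T_0=0$ gives, for $i\le (n-1)/2$, the equation $r_\ell\sum_{j=i+1}^{n-i-1}T_j-T_{n-i}=0$, which has \emph{no} $(r_\ell+1)T_i$ term; that term appears only in the complementary equations indexed by $n-i$, which read $(r_\ell+1)T_i+r_\ell\sum_{j=i+1}^{n-i-1}T_j+r_\ell T_{n-i}=0$. Neither of these coincides with the displayed identity, so "separating the boundary terms'' as you describe will not recover it. More seriously, your justification for restricting to $i=1,\dots,\lfloor n/2\rfloor$ --- that the remaining equations are linearly dependent on these via skew-symmetry --- is false: item (1) only halves the system under the hypotheses $r_{1,0}=-1$, $r_{\ell,0}=0$, whereas here $r_{\ell,0}=r_\ell$ is arbitrary. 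Concretely, for $n=4$ the equation for $i=1$ is $r_\ell T_2-T_3=0$ while the equation for $i=3$ is $(r_\ell+1)T_1+r_\ell T_2+r_\ell T_3=0$; together they are equivalent to the first equation plus the extra condition $(r_\ell+1)(T_1+T_3)=0$, so the pair is genuinely stronger than either single equation. You should be aware that the paper itself gives no argument here (its proof of item (3) is literally "follows by direct computation''), and the computation as you and the paper describe it does not obviously produce the stated formula; but as written your proposal asserts a dependency that does not hold, and that assertion is a genuine gap.
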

\begin{proof}
For $r_{1,0}=-1$ the vector $\hat{\bf t}+r_{1,0}{\bf t}$ is skew-symmetric
while for $r_{1,0}=1$ is symmetric. In order to prove item (1) it is enough to show that the vector
$T^*{\bf r}_\ell^{*t}$ is skew-symmetric (resp. symmetric). Let
$J$ be the $(n-1)\times (n-1)$ Hankel matrix defined in
Notation \ref{not:hankel_matrix}. Then
$$
J\,T^*\,{\bf r}_\ell^{*t}=-J^2\,T^*\,J\,{\bf r}_\ell^{*t}=
-T^*{\bf r}_\ell^{*t},
$$
where we have used the skew-symmetry of $T^*$ and the symmetry of ${\bf r}_\ell^*$.
The symmetric case is done similarly. For item (2) notice that because
$n$ is even and because the $n-1$ dimensional vector $T^*{\bf r}_\ell^{*t}$
is skew-symmetric, its middle element is zero. Thus the (homogeneous)
linear system \eqref{eq:linear_system_2_inv_matrix} has $2(\frac{n}{2}-1)=n-2$
equations with $n-1$ variables ($T_1, T_2, \ldots, T_{n-1}$) and therefore a non-trivial solution.
Item (3) follows by direct computation.
\end{proof}%
The next theorem about maps of the form \eqref{eq:map_gener}-\eqref{eq:def_of_M}
is a corollary of Lemma \ref{lem:simple_form_palindr}.
\begin{thm}
\label{prop:pois_preserv}
Let $n$ be even and $M$ the map \eqref{eq:map_gener}-\eqref{eq:def_of_M}
with the function $F$ of \eqref{eq:second_form} being of the form
$$
F({\bf x})={\bf x}^{r_1} \tilde{\psi}({\bf x}^{r_2})
$$
for some real function $\tilde{\psi}\in C^1(\mathbb R)$. If
$r_{1,0}=-1, r_{2,0}=0$ and ${\bf r}_\ell^*$ symmetric for
$\ell=1,2$ then there is a quadratic Poisson structure
$\pb{x_i,x_j}=T_{i,j}x_ix_j$ that is preserved by the map $M$.
The matrix $T$ is a skew-symmetric Toeplitz matrix with first
row $(0 \; T_1 \; T_2 \; \ldots \; T_{n-1})$, where the $T_i$
are determined by the non-trivial solution of
\eqref{eq:linear_system_2_inv_matrix}.
\end{thm}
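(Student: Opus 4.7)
The plan is to obtain the theorem by chaining the two preceding lemmas: Lemma \ref{lem:gen_form}(2) reduces the Poisson preservation condition for $M$ to solvability of a concrete linear system in the unknown Toeplitz entries $T_1,\ldots,T_{n-1}$, and Lemma \ref{lem:simple_form_palindr}(2) then guarantees the existence of a non-trivial solution under exactly the stated hypotheses. So the task reduces to verifying that the hypotheses of the theorem match the hypotheses of those lemmas.

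First I would check applicability of Lemma \ref{lem:gen_form}(2). Writing $z_\ell={\bf x}^{{\bf r}_\ell}$, the function $F({\bf x})={\bf x}^{{\bf r}_1}\tilde\psi({\bf x}^{{\bf r}_2})=z_1\tilde\psi(z_2)$ has exactly the required form, so Lemma \ref{lem:gen_form}(2) (with $k=2$) applies: $M$ preserves $\{x_i,x_j\}=T_{i,j}x_ix_j$ (for a skew-symmetric Toeplitz $T$) iff the system \eqref{eq:linear_system_2}, equivalently the matrix system \eqref{eq:linear_system_2_inv_matrix}, admits a solution in $T_1,\ldots,T_{n-1}$.

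Next I would verify that the hypotheses of Lemma \ref{lem:simple_form_palindr}(2) hold: $n$ is even, $k=2$, $r_{1,0}=-1$, $r_{2,0}=0$, and both ${\bf r}_1^*$ and ${\bf r}_2^*$ are symmetric — these are precisely the assumptions of the theorem. Lemma \ref{lem:simple_form_palindr}(2) then asserts that \eqref{eq:linear_system_2_inv_matrix} has a non-trivial solution $(T_1,\ldots,T_{n-1})$. Building the skew-symmetric Toeplitz matrix $T$ with first row $(0\;T_1\;T_2\;\cdots\;T_{n-1})$ from this non-trivial solution yields, via Lemma \ref{lem:gen_form}(2), a (non-zero) quadratic Poisson structure $\{x_i,x_j\}=T_{i,j}x_ix_j$ preserved by $M$, which is what we want.

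Since both underlying lemmas do the real work, there is essentially no obstacle beyond bookkeeping; the only thing I would take care with is confirming that ``non-trivial solution'' in Lemma \ref{lem:simple_form_palindr}(2) really does produce a non-zero Poisson tensor (as opposed to, say, only the Casimir-type solution of no interest), but since the system is homogeneous, any non-zero choice of $(T_1,\ldots,T_{n-1})$ gives a non-zero skew-symmetric matrix $T$ and hence a genuine, non-trivial quadratic log-canonical Poisson structure preserved by the map $M$.
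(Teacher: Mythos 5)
Your proposal is correct and is essentially the paper's own argument: the paper explicitly presents this theorem as a corollary of Lemma \ref{lem:simple_form_palindr}, with Lemma \ref{lem:gen_form}(2) providing the reduction of Poisson preservation to the linear system \eqref{eq:linear_system_2_inv_matrix} in the unknowns $T_1,\ldots,T_{n-1}$, and Lemma \ref{lem:simple_form_palindr}(2) supplying the non-trivial solution under exactly the stated hypotheses. Your closing remark that a non-trivial solution yields a non-zero skew-symmetric Toeplitz $T$, and hence a genuine log-canonical Poisson structure, is the right point to check and is handled correctly.
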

%
In the rest of
this section we present Poisson structures that are preserved by the maps
of Table \ref{tbl:table1}. First, notice that the vectors
${\bf r}_i$ which define the mKdV and pKdV maps are identical.
Therefore, according to Lemma \ref{lem:rel_of_two_forms},
a constant Poisson structure $T$ is preserved by the
pKdV map if and only if the quadratic log-canonical Poisson
structure defined by the matrix $T$ is preserved by the
mKdV map. This is in accordance with the results of \cite{dihn_3}.
An easy calculation, using \eqref{eq:linear_system_2} or
\eqref{eq:linear_system_2_inv_matrix}, shows that in even dimensions the SG
map preserves the non-degenerate log-canonical Poisson structure
defined by the Toeplitz matrix with first line
$(\overline{0,1})$. The KdV map preserves a constant Poisson
structure with Toeplitz matrix $T$ where, for $n\equiv 2\mod 4$ the first line of $T$
is $(\overline{0,1,-1,0},0,1)$ and for $n\equiv 0\mod 4$ is $(\overline{0, 0, 1, -1})$.
These Poisson structures are non-degenerate. For
$n\equiv 3\mod 4$ the first line of $T$ is
$(\overline{0,1,0,-1},0,1,0)$ and $T$ is degenerate with
rank 2.
For all other remaining cases the linear system
\eqref{eq:linear_system_2} does not have a solution and therefore
there are no log-canonical (respectively constant, for the
odd-dimensional SG map) Poisson structures that are
preserved. In the next proposition we show that
reductions of these remaining cases give rise to maps that preserve
Poisson structures of our form (see also Proposition \ref{prop:redu}).


\begin{prop}\ %
\label{prop:reductions}
\begin{enumerate}
\item For $n\equiv 1\mod 4$ the reduction $w_i=x_ix_{i+1}$ of the KdV map
gives rise to a map of the same form \eqref{eq:map_gener}-\eqref{eq:def_of_M}
with function $F$ as in \eqref{eq:second_form} that preserves the log-canonical
Poisson structure defined by the non-degenerate Toeplitz matrix with first line
$(0,\overline{0,1,0,-1},0,1,0)$.
\item For the odd dimensional SG map the reduction $w_i=x_ix_{i+1}$
gives rise to a map of the same form \eqref{eq:map_gener}-\eqref{eq:def_of_M}
with function $F$ as in \eqref{eq:second_form} that preserves the log-canonical
Poisson structure defined by the non-degenerate Toeplitz matrix with first line
$(0,\overline{1})$.
\item For the even (resp. odd) dimensional mKdV map the reduction
$w_i=\frac{x_{i+2}}{x_i}$ (resp. $w_i=\frac{x_{i+1}}{x_i}$) gives rise
to a map  of the form \eqref{eq:map_gener}-\eqref{eq:def_of_M}
with function $F$ as in \eqref{eq:second_form} that preserves the
log-canonical Poisson structure defined by the non-degenerate
Toeplitz matrix with first line $(0, 1,\overline{0})$
(resp. $(0, 1, \overline{-1, 1})$).
\end{enumerate}
\end{prop}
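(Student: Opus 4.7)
My plan is to treat each of the three parts uniformly in four sub-steps: (a)~write down the reduced map explicitly in coordinates $(w_0,\ldots,w_{N-1})$ of the appropriate dimension $N$; (b)~identify exponent vectors $\mathbf{r}_1,\mathbf{r}_2,\ldots$ so that the reduced $G(\mathbf{w})$ fits the form~\eqref{eq:second_form}; (c)~verify the linear system~\eqref{eq:linear_system_2} for the stated Toeplitz matrix $T$, thereby invoking Lemma~\ref{lem:gen_form}(2) to conclude Poisson preservation; and (d)~check non-degeneracy of $T$.

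For part~(3) step~(a) is straightforward. With $w_i=x_{i+2}/x_i$ (even case, $N=n-2$) or $w_i=x_{i+1}/x_i$ (odd case, $N=n-1$), the shift $x'_i=x_{i+1}$ pushes forward to $w'_i=w_{i+1}$ for $i<N-1$, and the last component $w'_{N-1}$ equals $x_n/x_{n-2}$ or $x_n/x_{n-1}$ respectively. Substituting the mKdV formula $x_n=x_0(p_1 x_{n-1}/x_1+q_1)/(p_2 x_{n-1}/x_1+q_2)$ and telescoping the ratios as, for example, $x_{n-1}/x_1=w_1 w_3\cdots w_{n-3}$ and $x_{n-2}/x_0=w_0 w_2\cdots w_{n-4}$ in the even case, writes $w'_{N-1}$ as an explicit rational expression $\mathbf{w}^{\mathbf{r}_1}(p_1\mathbf{w}^{\mathbf{r}_2}+q_1)/(p_2\mathbf{w}^{\mathbf{r}_2}+q_2)$. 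Step~(a) for part~(2) is analogous: using the SG formula $x_n=(1/x_0)(p_1 x_1 x_{n-1}+q_1)/(p_2 x_1 x_{n-1}+q_2)$ together with $w_i=x_i x_{i+1}$, both $x_{n-1}/x_0$ and $x_1 x_{n-1}$ unfold as monomials $\prod_j w_j^{a_j}$ by pairing even- and odd-indexed $w_j$'s, which is precisely what the oddness of $n$ permits.

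Step~(b) then reads off $\mathbf{r}_1,\mathbf{r}_2$ (and $\mathbf{r}_3$ when needed) from the closed-form $G(\mathbf{w})$ produced in step~(a); in every case the vectors exhibit a sparse alternating or palindromic pattern. Step~(c) is mechanical: because the stated first rows of $T$ are sparse (only two nonzero entries of $T_j$ in parts~(2) and (3)-even, and a $4$-periodic pattern in parts~(1) and (3)-odd), each convolution $\sum_j r_{\ell,j}T_{j-i}$ collapses to a short alternating sum picking out a handful of shifted entries of $\mathbf{r}_\ell$, and the resulting equations reduce to local parity/periodicity checks against the patterns of $\mathbf{r}_\ell$ from step~(b). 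Step~(d) follows by Laplace expansion along the first row, or by an explicit banded block inversion, and yields $|\det T|=1$ in each case.

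The principal obstacle is step~(a) for part~(1). Since the KdV $F$ is a sum rather than a monomial, expanding $w_{N-1}'=x_{n-1}x_n$ produces cross products $x_{n-1}x_j$ (for $j=0,\ldots,n-1$) together with $x_{n-1}(p_1\sigma+q_1)/(p_2\sigma+q_2)$, and most of these cross terms are not individually of the form $w_i=x_ix_{i+1}$. Closure of the reduction on the $w$'s therefore hinges on a non-trivial algebraic rearrangement, and the congruence $n\equiv 1\pmod{4}$ must be what engineers the right groupings/cancellations among the unwanted cross products (and against the $\mathbf{r}_2\cdot\mathbf{x}$ piece in $F$). Once this closure has been established and $\mathbf{r}_1,\mathbf{r}_2,\ldots$ have been extracted so as to match the periodic pattern of the stated Toeplitz row, the verifications (b)--(d) proceed as above.
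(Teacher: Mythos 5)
Your overall strategy (compute the reduced map, read off the exponent vectors $\mathbf{r}_\ell$, verify the linear system \eqref{eq:linear_system_2}, check non-degeneracy) is exactly the paper's, which simply records the reduced vectors $\mathbf{r}_1,\mathbf{r}_2$ for each case and calls the rest straightforward; your treatment of parts (2) and (3) is correct and your telescoping of the ratios reproduces the paper's vectors. The problem is part (1), which you explicitly leave open: you identify the ``principal obstacle'' that the KdV $F$ is a sum, so that $x_{n-1}x_n$ generates cross products, and then assert without proof that the congruence $n\equiv 1\pmod 4$ ``must be what engineers the right groupings/cancellations.'' That is a genuine gap --- you have not established closure of the reduction, which is the only non-routine point of item (1).

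The gap dissolves once you notice that the statement forces a multiplicative reading of the KdV map: it asks for $F$ of the form \eqref{eq:second_form} and for a \emph{log-canonical} (quadratic) structure, neither of which applies to the additive KdV map of Table \ref{tbl:table1}. So you must first pass to the multiplicative counterpart via Lemma \ref{lem:rel_of_two_forms}, i.e.\ $F(\mathbf{x})=\mathbf{x}^{\mathbf{r}_1}\tilde\psi(\mathbf{x}^{\mathbf{r}_2})$ with $\mathbf{r}_1=(\overline{-1})$, $\mathbf{r}_2=(0,0,\overline{1},0)$. Then $w_{n-2}'=x_{n-1}x_n$ is again a monomial times a function of a monomial --- no cross terms ever arise --- and closure under $w_i=x_ix_{i+1}$ reduces to checking that the exponent vectors of $x_{n-1}\mathbf{x}^{\mathbf{r}_1}$ and of $\mathbf{x}^{\mathbf{r}_2}$ lie in the sublattice spanned by the $w_i$, which is the alternating-sum condition $r_{n-1}=r_{n-2}-r_{n-3}+\cdots\pm r_0$; this holds for both vectors precisely because $n$ is odd. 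The role of $n\equiv 1\pmod 4$ is not to produce cancellations but to single out the residual case: for $n\equiv 3\pmod 4$ the unreduced KdV map already preserves a (degenerate) structure, as stated just before the proposition. Solving the recursion $a_j=r_j-a_{j-1}$ then yields the reduced exponents (e.g.\ $(\overline{-1,0})$ for the first factor), after which your steps (b)--(d) go through as in the other parts.
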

\begin{proof}
We only give the vectors $\bf{r}_i$ obtained after the
reductions since the rest are straightforward computations.
For the KdV reductions the vectors ${\bf r}_1$ and ${\bf r}_2$ are
${\bf r}_1=(\overline{-1,0})$ and
${\bf r}_2=(0,0,\overline{1},0)$
and for the SG reductions they are
${\bf r}_1=(\overline{-1,1})$ and
${\bf r}_2=(0,1,\overline{-1,1})$.
For the even dimensional mKdV map they are given by
${\bf r}_1=(\overline{-1,0})$ and
${\bf r}_2=(\overline{0,1})$ and for the odd
dimensional mKdV map by
${\bf r}_1=(\overline{-1})$ and
${\bf r}_2=(0,\overline{1})$.
\end{proof}

\section{Finding the difference equation given the Poisson structure}
\label{sec:pois_to_map}

We consider now the inverse problem of finding a difference equation
of the form \eqref{eq:diff_equat} that preserves a given Poisson structure.
In what follows we assume that the matrix $T$ is skew-symmetric
and Toeplitz. We first show that if the map $M$ defined by
\eqref{eq:map_gener}-\eqref{eq:def_of_M}
preserves the quadratic log-canonical Poisson structure with matrix $T$
then the function $F$ is necessarily of the form \eqref{eq:second_form}.

\begin{prop}
\label{prop:form_of_M}
Let $M$ be the map \eqref{eq:map_gener}-\eqref{eq:def_of_M}
which preserves a quadratic log-canonical Poisson structure with matrix $T$.
Then the function $F$ defining $M$ is of the form \eqref{eq:second_form}.
\end{prop}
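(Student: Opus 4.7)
The plan is to translate the Poisson-preservation condition into a first-order linear PDE system for $F$ via Lemma \ref{lem:gen_form}(1), pass to logarithmic coordinates so that the system becomes constant-coefficient, and then read off the claimed structure from the general solution. With $\Omega_{i,j}({\bf x}) = T_{j-i}\,x_i x_j$, the Toeplitz property of $T$ makes the conditions $\Omega_{i+1,j+1}({\bf x}) = \Omega_{i,j}({\bf x}')$ for $0\le i<j<n-1$ automatic, since both sides equal $T_{j-i}\,x_{i+1}x_{j+1}$. Cancelling the common factor $x_{i+1}$ from the remaining $n-1$ conditions leaves the Euler-type PDE system
$$
\sum_{j=0}^{n-1} T_{j-i-1}\, x_j\, \frac{\partial F}{\partial x_j}
\;=\; T_{n-1-i}\, F,
\qquad i=0,1,\dots,n-2,
$$
whose coefficient matrix is exactly the $(n-1)\times n$ matrix $Q$ of Notation \ref{not:hankel_matrix} and whose right-hand side is $\hat{\bf t}\,F$.

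\textbf{Reduction to constant coefficients.} Since the log-canonical set-up is adapted to ${\bf x}\in\mathbb{R}^n_{>0}$, I would restrict to a connected region in which $F$ has fixed sign and set $\xi_j := \log x_j$, $G := \log|F|$. The Euler operators $x_j\partial/\partial x_j$ become $\partial/\partial\xi_j$, and the system turns into the linear inhomogeneous first-order PDE with \emph{constant} coefficients
$$
Q\,\nabla_\xi G \;=\; \hat{\bf t}.
$$
Evaluating $\nabla_\xi G$ at any one point produces a vector ${\bf s}$ with $Q{\bf s} = \hat{\bf t}$, so $G_p(\xi) := {\bf s}\cdot\xi$ is a linear particular solution and $H := G - G_p$ satisfies the homogeneous equation $Q\,\nabla_\xi H = 0$.

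\textbf{Homogeneous solutions are functions of linear forms.} The central step, and the one that carries real content, is to show that the condition $\nabla_\xi H(\xi)\in\ker Q$ at every point forces $H$ to depend only on the linear forms ${\bf r}_2\cdot\xi,\dots,{\bf r}_k\cdot\xi$, where $\{{\bf r}_2,\dots,{\bf r}_k\}$ is any basis of $\ker Q$ and $k-1=\dim\ker Q$. For any ${\bf w}\in(\ker Q)^\perp=\operatorname{range}(Q^t)$ one has $\tfrac{d}{dt}H(\xi+t{\bf w})={\bf w}\cdot\nabla_\xi H=0$, so $H$ is constant on every affine translate of $(\ker Q)^\perp$ and therefore descends to the quotient $\mathbb{R}^n/(\ker Q)^\perp$, which is naturally coordinatised by the linear forms ${\bf r}_\ell\cdot\xi$ for $\ell=2,\dots,k$. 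This would be delicate for non-constant coefficients, but the log substitution has collapsed it to a basis change.

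\textbf{Assembly.} Setting ${\bf r}_1:={\bf s}$ and $\tilde H(z_2,\dots,z_k):=\exp H(\log z_2,\dots,\log z_k)$, exponentiating $G=G_p+H$ yields
$$
F({\bf x}) \;=\; {\bf x}^{{\bf r}_1}\,\tilde H\bigl({\bf x}^{{\bf r}_2},\dots,{\bf x}^{{\bf r}_k}\bigr),
$$
which, upon defining $z_\ell:={\bf x}^{{\bf r}_\ell}$ and $\psi(z_1,\dots,z_k):=z_1\,\tilde H(z_2,\dots,z_k)$, is precisely the form \eqref{eq:second_form}. The main obstacle is the middle paragraph (characterising the homogeneous solutions); everything else is either automatic from the Toeplitz structure or bookkeeping around the $\log$ transform.
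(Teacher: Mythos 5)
Your proposal is correct and follows essentially the same route as the paper: both reduce preservation to the linear PDE system $\sum_j T_{j-i-1}x_j\,\partial F/\partial x_j=T_{n-1-i}F$, split off a monomial particular solution ${\bf x}^{{\bf r}_1}$ with $Q{\bf r}_1^t=\hat{\bf t}$, and identify the homogeneous solutions as functions of the monomials ${\bf x}^{{\bf r}_\ell}$ with ${\bf r}_\ell\in\ker Q$ (the paper invokes the method of characteristics where you use the logarithmic change of variables to make the system constant-coefficient). Your observation that evaluating $\nabla_\xi G$ at a single point already yields a solution of $Q{\bf s}=\hat{\bf t}$ is a nice touch that justifies the existence of the particular solution directly from the hypothesis, a point the paper's proof passes over.
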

\begin{proof}
According to Lemma \ref{lem:gen_form}, if the map $M$ preserves the quadratic Poisson structure with
matrix $T$ then
\begin{equation}
\label{poisson_relation2}
\sum_{j=0}^{n-1}\frac{\partial F}{\partial x_j}T_{j-i}x_j=T_{n-i}F({\bf x}),
\quad i=1,2,\ldots,n-1\,.
\end{equation}
If $\bf r_1$ is a solution of the non-homogeneous linear system \eqref{eq:linear_system_2} then
we can verify that $F_1={\bf x}^{\bf r_1}$ is a solution of \eqref{poisson_relation2}. Since
the ratio of two solutions of \eqref{poisson_relation2} is a solution of the corresponding
homogeneous system it follows that its general solution
is $F={\bf x}^{\bf r_1}\tilde{F}(\bf x)$ where $\tilde{F}$ is the solution
of the homogeneous one. The system
\begin{equation}
\label{eq:hom_pde}
\sum_{j=0}^{n-1}\frac{\partial F}{\partial x_j}T_{j-i}x_j=0,
\quad i=1,2,\ldots,n-1\,,
\end{equation}
is linear and can be solved using the method of characteristics (see \cite{pdes}). 
It can be verified directly that if ${\bf r}_\ell$ is a solution of the
homogeneous part of \eqref{eq:linear_system_2} then a solution
$\tilde{F}$ of \eqref{eq:hom_pde} will remain constant along the
surface defined by ${\bf x}^{\bf r_\ell}=C$. This shows that
the solution of \eqref{poisson_relation2} is
$F({\bf x})={\bf x}^{{\bf r}_1}\tilde{\psi}({\bf x}^{{\bf r}_2},{\bf x}^{{\bf r}_3},\ldots,{\bf x}^{{\bf r}_k})$
where $\tilde{\psi}$ is any real function of $k-1$ variables and
${\bf r}_\ell,$ for $\ell=2,3,\ldots, k$, are solutions of the homogeneous part
of \eqref{eq:linear_system_2}. Therefore $F$ is indeed of the form
\eqref{eq:second_form}.
\end{proof}

The proof of the previous proposition shows that the existence of a map of the form
\eqref{eq:map_gener}-\eqref{eq:def_of_M} preserving a given log-canonical Poisson
structure amounts to a solution of a linear system. Assuming that the matrix
$T$ has sufficiently large rank then we can derive existence and non-existence results
about maps that preserve the corresponding Poisson structure.
%
\begin{prop}
\label{prop:pois_preserv_inv}
Let $Q$ be the matrix obtained from $T$ by deleting its first row, as in Notation \ref{not:hankel_matrix},
and $\PB$ the quadratic Poisson structure of the form \eqref{eq:lv_ps} with matrix $T$.
\begin{enumerate}
\item If $Q$ is of maximal rank, then there exists a map $M$
of the form \eqref{eq:map_gener}-\eqref{eq:def_of_M} with function $F$
of the form \eqref{eq:second_form} which preserves
the Poisson structure $\PB$.
\item If $Q$ is of rank $m\leq n-1$ and $M$ is a map of the form
\eqref{eq:map_gener}-\eqref{eq:def_of_M} which preserves
the Poisson structure $\PB$, then $F$ is of the form
\eqref{eq:second_form} and the vectors ${\bf r}_\ell,$ for
$\ell=2,3,\ldots,k$, form a linear space of dimension smaller or
equal than $n-m$.
\item Assuming that the vectors ${\bf r}_\ell$ for $\ell=2,3,\ldots,k$ are
linearly independent, $Q$ is of rank $m\leq n-1$ and $M$ the map
\eqref{eq:map_gener}-\eqref{eq:def_of_M} with $F$ of the form
\eqref{eq:second_form} with $k>n-m+1$, then $M$ does not preserve
the Poisson structure $\PB$.
\end{enumerate}
\end{prop}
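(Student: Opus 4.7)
The plan is to prove all three parts in a unified way by combining Proposition \ref{prop:form_of_M} with the linear-algebra reformulation \eqref{eq:linear_system_2_inv_matrix_new}. Recall that $Q$ is the $(n-1) \times n$ matrix obtained from $T$ by deleting its first row, so its maximal rank is $n-1$ and $\dim \ker Q = n - \operatorname{rank}(Q)$. The vectors ${\bf r}_\ell$ defining a map $M$ of the required form preserve $\PB$ exactly when $Q{\bf r}_1^t = \hat{\bf t}$ (inhomogeneous) and $Q{\bf r}_\ell^t = 0$ for $\ell \geq 2$ (homogeneous).

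For part (1), I would observe that if $\operatorname{rank}(Q) = n-1$, then $Q:\mathbb{R}^n \to \mathbb{R}^{n-1}$ is surjective, so the inhomogeneous equation $Q{\bf r}_1^t = \hat{\bf t}$ admits a solution ${\bf r}_1$. I would then take $F({\bf x}) = {\bf x}^{{\bf r}_1}$ (the simplest choice, corresponding to $k=1$ and $\tilde{\psi} \equiv 1$), which is of the form \eqref{eq:second_form}, and invoke Lemma \ref{lem:gen_form}(2) to conclude that the resulting map $M$ preserves $\PB$: in that lemma with $k=1$, the only surviving condition is precisely the first (inhomogeneous) equation of \eqref{eq:linear_system_2_inv_matrix_new}, satisfied by construction.

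For part (2), Proposition \ref{prop:form_of_M} directly yields that $F$ must be of the form \eqref{eq:second_form}. The auxiliary vectors ${\bf r}_\ell$ for $\ell \geq 2$ each solve $Q{\bf r}_\ell^t = 0$, so they lie in $\ker Q$, a subspace of dimension $n - m$. Hence $\operatorname{span}\{{\bf r}_2,\ldots,{\bf r}_k\}$ is a subspace of $\ker Q$ and its dimension is at most $n - m$. Part (3) is then an immediate corollary: if ${\bf r}_2,\ldots,{\bf r}_k$ are linearly independent, their span has dimension exactly $k-1$, so $k - 1 \leq n - m$, contradicting $k > n - m + 1$.

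I do not expect a genuine obstacle here: once Proposition \ref{prop:form_of_M} is granted, the entire statement is elementary linear algebra on the $(n-1) \times n$ matrix $Q$. The only detail to handle carefully in part (1) is to verify that the concrete choice $F({\bf x}) = {\bf x}^{{\bf r}_1}$ indeed fits the hypotheses of Lemma \ref{lem:gen_form}(2) with $k=1$, so that the $\ell \geq 2$ equations of \eqref{eq:linear_system_2_inv_matrix_new} become vacuous and only the (solvable) inhomogeneous equation remains.
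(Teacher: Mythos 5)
Your proposal is correct and follows essentially the same route as the paper: reduce everything to the linear system \eqref{eq:linear_system_2_inv_matrix_new}, use surjectivity of $Q$ for existence in item (1), and use $\dim\ker Q = n-m$ for items (2) and (3). The only cosmetic difference is that in item (1) you fix the concrete witness $k=1$, $F={\bf x}^{{\bf r}_1}$, whereas the paper keeps $k$ general and notes that each homogeneous system ($n-1$ equations, $n$ unknowns) also admits a non-trivial solution; both are valid.
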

\begin{proof}
For item (1) we first note that from Proposition \ref{prop:form_of_M},
if the map M preserves the Poisson structure $\PB$ then the function $F$
defining $M$ is necessarily of the form \eqref{eq:second_form}. The
existence of such $M$ is guaranteed by the existence of solutions of
the linear system \eqref{eq:linear_system_2_inv_matrix_new}
which consist of $k$ linear systems (in the ${\bf r}_\ell$'s) each one
having $n-1$ equations and $n$ variables. Because $Q$ is of maximal rank
its non-homogeneous part has a solution, and due to their dimension, the
rest homogeneous $k-1$ linear systems have a non-trivial solution.
The proof of items (2) and (3) is a consequence of the dimension of the homogeneous
part of the linear system \eqref{eq:linear_system_2_inv_matrix_new}.
\end{proof}

We now show that for the non-degenerate Poisson structures, the maps that preserve them
given in item (1) of the previous proposition, have $r_{1,0}=-1$,
$r_{\ell,0}=0$ for $\ell=2,\ldots, k$ and the vectors ${\bf r}^*_\ell$
are symmetric for $\ell=1,2,\ldots, k$ (cf. \cite{toep2,toep1}).
This serves as a partial inverse of Theorem \ref{prop:pois_preserv}.
\begin{thm}
\label{prop:inv_prob_symm_even}
Let $n$ be even and $T$ an $n\times n$ matrix of full rank. Also let $\PB$ be the
Poisson structure $\pb{x_i,x_j}=T_{i,j}x_ix_j$ and $M$ a map of the
form \eqref{eq:map_gener}-\eqref{eq:def_of_M} which preserves $\PB$.
Then the function $F$ defining $M$ is of the form \eqref{eq:second_form}
with
$$
F({\bf x})=\frac{({\bf x}^*)^{{\bf r}_1^*}}{x_0}\,\tilde{\psi}(({\bf x}^*)^{{\bf r}_2^*})\,,
$$
where the vectors ${\bf r}^*_1, {\bf r}^*_2$ are symmetric. They are explicitly
given by the formulas
\begin{equation*}
r_{1,j}=\frac{\det(T^{(j+1,1)})}{\det(T)}, \quad j=1,\ldots, n-1.
\end{equation*}
The matrix $T^{(j,1)}$ is obtained from $T$ by replacing its
$j$-th column by the vector
$\begin{pmatrix}
a&
T_{n-1}&
T_{n-2}&
\cdots&
T_1
\end{pmatrix}^t$
where $a\in\mathbb R$ and
$$
r_{2,j}=\operatorname{cofactor}(T,1,j+1), \quad j=0,1,\ldots, n-1.
$$
The $\operatorname{cofactor}(T,1,j+1)$ is the signed determinant
of the minor of $T$ obtained by deleting its first row and $j+1$
column.
\end{thm}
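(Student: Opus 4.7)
The plan is to reduce everything to the linear system already identified in \eqref{eq:linear_system_2_inv_matrix_new} and then extract the palindromic structure from the skew-Toeplitz symmetry of $T$. Proposition \ref{prop:form_of_M} forces $F$ to be of the form \eqref{eq:second_form}, so I only have to solve for the exponent vectors $\mathbf{r}_\ell$. Since $T$ is invertible (full rank, $n$ even), the submatrix $Q$ has maximal rank $n-1$, the homogeneous system $Q\mathbf{r}^t=0$ has one-dimensional solution space, and after absorbing extra homogeneous solutions into $\tilde{\psi}$ I may take $k=2$. Augmenting $Q\mathbf{r}_1^t=\hat{\mathbf{t}}$ to the square system $T\mathbf{r}_1^t=(a,T_{n-1},T_{n-2},\ldots,T_1)^t$ with $a$ a free parameter, Cramer's rule immediately yields $r_{1,j}=\det(T^{(j+1,1)})/\det(T)$. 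For $\mathbf{r}_2$, the cofactor identity---expanding the determinant of $T$ with its first row replaced by any other row of $T$---shows that $r_{2,j}=\operatorname{cofactor}(T,1,j+1)$ spans $\ker Q$; in particular $r_{2,0}=\det(T^*)=0$ because $T^*$ is odd-dimensional skew-symmetric, which reduces the defining equation for $\mathbf{r}_2^*$ to $T^*(\mathbf{r}_2^*)^t=0$.

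The substance of the theorem is the symmetry assertion, and the key structural input is that $T^*$ inherits from $T$ the skew-Toeplitz identity $J_{n-1}T^*J_{n-1}=-T^*$, equivalently $T^*J_{n-1}=-J_{n-1}T^*$. Decompose $\mathbb{R}^{n-1}$ into the $J_{n-1}$-eigenspaces $V_+$ (symmetric vectors) and $V_-$ (skew-symmetric vectors); since $n-1$ is odd one has $\dim V_+=n/2$ and $\dim V_-=n/2-1$, and the anticommutation relation sends $T^*\colon V_+\to V_-$ and $T^*\colon V_-\to V_+$. Invertibility of $T$ forces $\operatorname{rank}T^*=n-2$---at least $n-2$ because $Q$ has rank $n-1$ and deleting a column drops rank by at most one, and at most $n-2$ because the rank of a skew-symmetric matrix is always even---so $\dim\ker T^*=1$. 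Applying rank-nullity to $T^*|_{V_+}\colon V_+\to V_-$ forces $\dim\ker(T^*|_{V_+})\geq 1$; since the total kernel dimension is $1$, the whole of $\ker T^*$ lies in $V_+$, proving $\mathbf{r}_2^*$ is symmetric, and simultaneously $T^*|_{V_+}\colon V_+\to V_-$ is surjective.

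For $\mathbf{r}_1^*$, the identity \eqref{eq:linear_system_2_inv_matrix} with $r_{1,0}=-1$ becomes $T^*(\mathbf{r}_1^*)^t=\hat{\mathbf{t}}-\mathbf{t}$, and a direct check shows $\hat{\mathbf{t}}-\mathbf{t}\in V_-$ (reversing it produces its negative). Surjectivity of $T^*|_{V_+}$ onto $V_-$ produces a symmetric particular solution, and every other solution differs from it by a multiple of the symmetric vector $\mathbf{r}_2^*$; hence every $\mathbf{r}_1^*$ arising from the formula is symmetric. The residual freedom in adding multiples of $\mathbf{r}_2$ to $\mathbf{r}_1$ is harmless: it is absorbed by redefining $\tilde{\psi}$, and since $r_{2,0}=0$ it preserves the normalization $r_{1,0}=-1$. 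The main obstacle throughout is this symmetry step: the Cramer and cofactor formulas for $\mathbf{r}_1$ and $\mathbf{r}_2$ do not, on their face, reveal any palindromic structure, and the $J_{n-1}$-eigenspace decomposition combined with the fact that skew-symmetric matrices have even rank is the real engine.
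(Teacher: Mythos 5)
Your proof is correct and follows essentially the same route as the paper: both hinge on the Hankel involution $J$, the relation $JT^*J=-T^*$, a dimension count showing that the one-dimensional kernel of $T^*$ is spanned by a symmetric vector, and Cramer's rule / cofactor expansion for the explicit formulas. Your $V_\pm$ eigenspace decomposition with rank--nullity is a cleaner packaging of the paper's argument that $J\mathbf{q}$ is again a solution, so $\mathbf{q}-J\mathbf{q}$ is a skew-symmetric null vector which must vanish; you are also somewhat more explicit than the paper in justifying $\operatorname{rank}(T^*)=n-2$ via the even rank of skew-symmetric matrices.
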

\begin{proof}
From the previous proposition it follows that $k\leq 2$ and, because
of the rank of $T$, it is sufficient to show that
the linear systems (in ${\bf r}_1^{*t}, {\bf r}_2^{*t}$)
\begin{gather*}
\begin{split}
T^* \, {\bf r}_1^{*t}
=
\begin{pmatrix}
T_{n-1}-T_1 \\
T_{n-2}-T_2 \\
\vdots \\
T_1-T_{n-1}
\end{pmatrix},
\quad
T^* \, {\bf r}_2^{*t}=0
\end{split}
\end{gather*}
have symmetric solutions. That, will be a consequence of the following
more general result: For an $m\times m$ skew-symmetric Toeplitz
matrix $R$ of rank $m-1$ (hence $m$ is odd) and ${\bf b}\in\mathbb R^m$
skew-symmetric, the solutions of the linear system $R\,{\bf q}={\bf b}$
are symmetric. We recall from Notation
\ref{not:hankel_matrix}, that $J$ is the $m\times m$ matrix with entries
$J_{i,m+1-i}=1$ for all $i=1,2,\ldots, m$ and all other entries zero.
Then $J\,R\,J=-R$ and $J\,{\bf b}=-{\bf b}$. This shows that
$R\,J\,{\bf q}=-J\,R\,J^2\,{\bf q}=-J\,R\,{\bf q}={\bf b}$ and therefore
the vector ${\bf q} - J\, {\bf q}$ is a (skew-symmetric) null vector of $R$.
Showing that $R$ has a non-zero symmetric null vector it will imply
(because of the rank of $R$) that ${\bf q} - J\, {\bf q}=0$ and therefore
${\bf q}$ is symmetric.

Let ${\bf v}$ be a null vector of $R$. The previous proof
(with ${\bf q}={\bf v}$ and ${\bf b}=0$) shows that $J\,{\bf v}$ is also
a null vector of $R$ and therefore ${\bf v}+J\,{\bf v}$
is a null vector of $R$ as well. So, we can assume that ${\bf v}$
is symmetric and the proof of Lemma \ref{lem:simple_form_palindr}
(item 2), shows that the homogeneous linear system $R\,{\bf v}=0$ has
$\frac{m-1}{2}$ equations with $\frac{m-1}{2}+1$ unknowns,
therefore a non trivial solution.

Having established that if $T$ is an $n\times n$ skew-symmetric Toeplitz
matrix of full rank the solution of the linear system
\eqref{eq:linear_system_2_inv_matrix_new} has $r_{1,0}=-1$ and
$r_{2,0}=0$ we can now use Cramer's rule to give explicit
formulas for the ${\bf r}_\ell$ for $\ell=1,2$. The linear system
\eqref{eq:linear_system_2_inv_matrix_new} is equivalently written as
\begin{gather}
\label{eq:matricial_sys_using_T}
T\,{\bf r}_1^t=
\begin{pmatrix}
a\\
T_{n-1} \\
T_{n-2} \\
\vdots \\
T_1
\end{pmatrix}, \quad
T\,{\bf r}_2^t=
\begin{pmatrix}
b\\
0 \\
0 \\
\vdots \\
0
\end{pmatrix}
\end{gather}
for arbitrary $a, b\in\mathbb R$. The one degree of freedom of
the linear system \eqref{eq:linear_system_2_inv_matrix_new} is
imposed into the parameters $a, b$. Cramer's rule gives that
\begin{equation}
r_{1,j}=\frac{\det(T^{(j+1,1)})}{\det(T)}, \quad, j=0,1,\ldots, n-1
\end{equation}
and similarly,
\begin{equation*}
r_{2,j}=\frac{\det(T^{(j+1,2)})}{\det(T)},
\quad j=0,1,\ldots, n-1.
\end{equation*}
Expanding the determinant $\det(T^{(j+1,2)})$ with
respect to its $j+1$ column we get
$$
r_{2,j}=\frac{b\operatorname{cofactor}(T,1,j+1)}{\det(T)},
\quad j=0,1,\ldots, n-1.
$$
The factor $\frac{b}{\det(T)}$ of the vector ${\bf r}_2$
can be absorbed into the arbitrary function $\tilde{\psi}$ and we get 
\begin{equation}
r_{2,j}=\operatorname{cofactor}(T,1,j+1), \quad j=0,1,\ldots, n-1.
\end{equation}
Also, because of the dimension of $T^*$, $r_{2,0}=\operatorname{cofactor}(T,1,1)=\det(T^*)=0$.
\end{proof}

\begin{rem}
\label{rem:arbitrary_last_elem}
Because $r_{2,0}=0$ and $T^* \, {\bf r}_2^{*t}=0$, the vector
${\bf r}_2$ does not depend on the entry $T_{n-1}$ of $T$.
This is consistent with the next example and with the results of
Table \ref{tbl:table2}. Also from the explicit form of the map $M$
given in the previous proof it follows that the map $M$ is invertible
(it can be solved for $x_0$), and also reversible, i.e.
$L^{-1}ML=M^{-1}$ for a suitable map $L$. In our case the map
$L$ is the involution ${\bf x}\mapsto \hat{\bf x}$.
\end{rem}

\begin{example}
\label{ex:1}
Suppose $n$ is even and $T_i=T_{i+1}$ for all $i<n-1$, i.e.
the first line of the matrix $T$ is
$$
(0,\overline{T_1},T_{n-1})=(0, T_1, T_1, \ldots, T_1, T_{n-1}).
$$
For generic values of $T_1, T_{n-1}$ the matrix $T$ is non-degenerate
with determinant $\det(T)=T_{n-1}^2T_1^{n-2}$ and the solution of
\eqref{eq:linear_system_2_inv_matrix_new} for ${\bf r}_2$ is
${\bf r}_2=(0,\overline{1,-1},1).$
Similarly for $T_i=-T_{i+1}$ for all $i<n-1$, the first line of $T$ becomes
$$
(0, \overline{T_1, -T_1}, T_{n-1})=(0, T_1, -T_1, \ldots, T_1, -T_1, T_{n-1})
$$
and the solution of \eqref{eq:linear_system_2_inv_matrix_new}
is ${\bf r}_2=(0,\overline{1})$.
\end{example}

We now consider the odd dimensional case.
\begin{thm}
\label{prop:inv_prob_symm_odd}
Let $n$ be odd and $T$ an $n\times n$ matrix with  $T^*$ of rank $n-1$.
We assume that $F$ is a function of the form \eqref{eq:second_form}
that defines the map \eqref{eq:map_gener}-\eqref{eq:def_of_M}
which preserves the Poisson structure $\pb{x_i,x_j}=T_{i,j}x_ix_j$.
Then if $k=2$ the vector ${\bf r}_2$ is symmetric and
can be chosen such that $r_{2,0}=1, \, {\bf r}_1^*=-\hat{\bf r}_2^*$
and $r_{1,0}=0$. Therefore the function $F$ is of the form
$$
F({\bf x})=({\bf x}^*)^{-\hat{\bf r}_{2}^*}\tilde{\psi}({\bf x}^{{\bf r}_{2}}).
$$
\end{thm}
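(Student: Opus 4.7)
The plan is to carry out the same scheme as in Theorem \ref{prop:inv_prob_symm_even}, adapted to the odd-dimensional case in which $T$ itself is necessarily degenerate while $T^*$ retains full rank.

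First, I invoke Proposition \ref{prop:form_of_M} so that $F$ has the form \eqref{eq:second_form} with vectors ${\bf r}_1,{\bf r}_2$ solving the system \eqref{eq:linear_system_2_inv_matrix_new}, namely $Q{\bf r}_1^t=\hat{\bf t}$ and $Q{\bf r}_2^t=0$. Writing
$$T=\begin{pmatrix} 0 & {\bf t}^t \\ -{\bf t} & T^* \end{pmatrix},$$
one sees that $T^*$ sits as an $(n-1)\times(n-1)$ submatrix inside both $Q$ and $T$, so $\operatorname{rank} Q=n-1$ and $\operatorname{rank} T=n-1$ (the maximum possible for an odd-dimensional skew-symmetric matrix). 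Both $\ker Q$ and $\ker T$ are therefore one-dimensional; as $\ker T\subseteq\ker Q$ automatically, they coincide. Consequently ${\bf r}_2$ is, up to scaling, the unique null vector of $T$, and ${\bf r}_1$ is unique up to addition of a multiple of ${\bf r}_2$.

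The main step, and the one I expect to be the obstacle, is showing that this null vector is symmetric. I would decompose $\mathbb{R}^n=V_+\oplus V_-$ into the $\pm 1$ eigenspaces of $J$, with $\dim V_\pm=(n\pm 1)/2$. The identity $JT=-TJ$ implies that $T$ swaps $V_+$ and $V_-$, so the restriction $T|_{V_+}:V_+\to V_-$ has kernel of dimension at least $\dim V_+-\dim V_-=1$; since $\dim\ker T=1$ globally, the entire kernel of $T$ must lie in $V_+$, i.e.\ ${\bf r}_2$ is symmetric. The block form of $T$ then yields $T^*{\bf r}_2^{*t}=r_{2,0}{\bf t}$. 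If $r_{2,0}$ vanished, the full rank of $T^*$ would force ${\bf r}_2=0$; so $r_{2,0}\neq 0$ and we rescale to $r_{2,0}=1$, obtaining $T^*{\bf r}_2^{*t}={\bf t}$.

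It remains to produce an ${\bf r}_1$ of the required shape. I would try $r_{1,0}=0$ and ${\bf r}_1^*=-\hat{\bf r}_2^*=-J{\bf r}_2^*$; then $Q{\bf r}_1^t=T^*{\bf r}_1^{*t}=-T^*J{\bf r}_2^{*t}$. Because $n-1$ is even and $T^*$ is skew-symmetric Toeplitz, $JT^*J=-T^*$, hence $T^*J=-JT^*$, and the right-hand side collapses to $JT^*{\bf r}_2^{*t}=J{\bf t}=\hat{\bf t}$, matching $Q{\bf r}_1^t=\hat{\bf t}$. Any other solution differs from this one by a scalar multiple of ${\bf r}_2$, which multiplies $F$ by a power of ${\bf x}^{{\bf r}_2}$ that is absorbed into $\tilde{\psi}$; hence ${\bf r}_1$ can always be taken in the stated form, giving $F({\bf x})=({\bf x}^*)^{-\hat{\bf r}_2^*}\tilde{\psi}({\bf x}^{{\bf r}_2})$.
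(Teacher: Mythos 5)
Your proof is correct and follows essentially the same route as the paper: reduce to the linear systems via Proposition \ref{prop:form_of_M}, establish symmetry of the null vector ${\bf r}_2$ from $JTJ=-T$, fix the normalizations $r_{2,0}=1$, $r_{1,0}=0$ by absorbing the remaining freedom into $\tilde{\psi}$, and deduce ${\bf r}_1^*=-\hat{\bf r}_2^*$ from $T^*J=-JT^*$. The only (cosmetic) difference is that you prove symmetry of $\ker T$ by an eigenspace dimension count for $J$, whereas the paper symmetrizes a null vector and counts equations in the symmetric subsystem; both hinge on the same anticommutation identity.
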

\begin{proof}
According to Proposition \ref{prop:form_of_M} the function $F$ defining $M$
is of the form $F({\bf x})={\bf x}^{{\bf r}_1}\tilde{\psi}({\bf x}^{{\bf r}_2})$.
From the proof of the previous theorem, the vector ${\bf r}_2$
being a null vector of $T$, is symmetric.
The defining relations of the vectors ${\bf r}_1$ and ${\bf r}_2$ are
the linear systems \eqref{eq:linear_system_2_inv_matrix}
which, from our assumption that $T^*$ has full rank,
they have a unique solution in ${\bf r}_1^*, {\bf r}_2^*$. The
arbitrary function $\tilde{\psi}$ absorbs the parameters $r_{1,0}$ and $r_{2,0}$
and can be chosen to be equal to $0$ and $1$ respectively. The skew-symmetry and
Toeplitz form of $T$ gives that ${\bf r}_1^*=-\hat{\bf r}_2^*$.
\end{proof}

\begin{rem}
If the vector ${\bf t}=(T_1, T_2, \ldots, T_{n-1})$ is symmetric then
the vector ${\bf r}_1^*$ can be absorbed into the arbitrary function
$\psi$ and we can choose ${\bf r}_1=(-1,\bar{0})$ while if
$\bf t$ is skew-symmetric we can choose ${\bf r}_1=(1,\bar{0})$.
\end{rem}
\begin{rem}
The explicit form of the map given in the previous proposition shows that
the map is invertible if and only if the function $\tilde{\psi}$ is invertible and it
is reversible, with reversing symmetry the same map $L$ as in the even dimensional case,
if and only if $\tilde{\psi}^{-1}=\tilde{\psi}$.
\end{rem}
\begin{example}
Continuing Example \ref{ex:1} for odd $n$, if $T_i=T_{i+1}\neq 0$
for all $i<n-1$, i.e. if the first line of $T$ is
$(0, T_1, T_1, \ldots, T_1, T_{n-1})$ we get
$$
{\bf r}_2=(1,-\frac{T_{n-1}}{T_1},\frac{T_{n-1}}{T_1},
-\frac{T_{n-1}}{T_1},\ldots,-\frac{T_{n-1}}{T_1},1).
$$
Similarly for $T_i=-T_{i+1}\neq 0$ for all $i<n-1$, we get
$$
{\bf r}_2=(1,-\frac{T_{n-1}}{T_1},-\frac{T_{n-1}}{T_1},
-\frac{T_{n-1}}{T_1},\ldots,-\frac{T_{n-1}}{T_1},1).
$$
\end{example}

\renewcommand{\arraystretch}{2}
\begin{center}
\begin{table}[ht]
  \caption{Non-degenerate Poisson structures and the function $F$ which defines the map 
  that preserves the Poisson structure.
}
  \label{tbl:table2}
  \begin{tabular}{ | c | c | c | }
    \hline
     $(T_1,T_2,\ldots,T_{n-1})$ & Determinant & Form of the function $F$\\
     \hhline{|=|=|=|}
	  $(\overline{1}, t)$ &
      $t^2$ &
      $F=\frac{(x_1x_{n-1})^{t-1}}{x_0}\tilde{\psi}(x_{2n-1}\prod_{j=1}^{\frac{n}{2}-1}\frac{x_{2j-1}}{x_{2j}})$
      \\ \hline
      $(\overline{1,-1}, t)$ &
      $t^2$ &
      $F=\frac{(x_1x_{n-1})^{1-t}}{x_0}\tilde{\psi}(\prod_{j=1}^{n-1}x_{j})$
      \\ \hline
      $(\overline{1,0}, t)$ &
      $2^{n-4}(t+1)^2$ &
      $F=\frac{(x_2x_{n-2})^{2t-2}}{x_0}\tilde{\psi}(x_1x_{n-1})$
      \\ \hline
      $(1,\overline{0},t)$ &
      $(t+1)^2$ &
      $F=\frac{\prod_{j=1}^{\frac{n}{2}-1}x_{2j}^{t-1}}{x_0}\tilde{\psi}(\prod_{j=1}^{\frac{n}{2}}x_{2j-1})$
      \\ \hline
      $(0,1,\overline{0},1,t)$, $\frac{n}{2}$ odd &
      $t^2$&
      $F=\frac{\prod_{j=0}^{\frac{n-2}{4}}x_{4j+1}^t}{x_0}\tilde{\psi}(x_1\prod_{j=1}^{\frac{n-2}{4}}\frac{x_{4j+1}}{x_{4j-1}})$
      \\ \hline
      $(0,1,\overline{0},1,t)$, $\frac{n}{2}$ even &
      $16$&
      $F=\frac{\prod_{j=1}^{\frac{n}{2}}x_{2j-1}^\frac{t}{2}}{x_0}\tilde{\psi}(\prod_{j=1}^{\frac{n}{4}}x_{4j-2})$
      \\ \hline
      \end{tabular}
\end{table}
\end{center}

In Tables \ref{tbl:table2}, \ref{tbl:table3} and \ref{tbl:table4}
we give the vector $\bf t$ and the form of the
function $F$ which defines the map $M$ that preserves
the quadratic Poisson structure $\pb{x_i,x_j}=T_{i,j}x_ix_j$
where ${\bf t}=(T_1,\ldots,T_{n-1})$. In Table
\ref{tbl:table2} we present non-degenerate Poisson structures
which depend on a parameter $t\in\mathbb R$
and in Table \ref{tbl:table3} we present the same structures with
$t$ chosen such that the matrix $T$ is degenerate. In Table
\ref{tbl:table4} we present Poisson structures of odd
dimension $n$. The results of Table \ref{tbl:table2} verify
Remark \ref{rem:arbitrary_last_elem}, that
the vector ${\bf r}_2$ is not affected from the last entry of
the matrix $T$ which is taken arbitrary so that the matrix
is non-degenerate. In Table \ref{tbl:table3} the rank of the
Poisson structures is $n-2$ and therefore the function $F$
can be a two variable function. These examples illustrate
the results of Theorems \ref{prop:inv_prob_symm_even} and
\ref{prop:inv_prob_symm_odd}.

\renewcommand{\arraystretch}{2}
\begin{center}
\begin{table}[ht]
  \caption{Degenerate Poisson structures of even dimension and the function
  $F$ which defines the map that preserves the Poisson structure.
  }
  \label{tbl:table3}
  \begin{tabular}{ | c | c | c | }
    \hline
     $(T_1,T_2,\ldots,T_{n-1})$ & Form of the function $F$\\
     \hhline{|=|=|=|}
	  $(\overline{1}, 0)$ &
      $F=\frac{1}{x_1}\tilde{\psi}(x_0x_{n-1},x_0\prod_{j=1}^{\frac{n}{2}}\frac{x_{2j}}{x_{2j-1}})$
      \\ \hline
      $(\overline{1,-1}, 0)$ &
      $F=\frac{1}{x_1}\tilde{\psi}(\frac{x_{n-1}}{x_0},\prod_{j=1}^{n-1}x_{j-1})$
      \\ \hline
      $(\overline{1,0}, -1)$ &
      $F=\frac{1}{x_2}\tilde{\psi}(x_1x_{n-1},x_0x_{n-2})$
      \\ \hline
      $(1,\overline{0},-1)$ &
      $F=x_0\tilde{\psi}(\prod_{j=1}^{\frac{n}{2}}x_{2j-2},\prod_{j=1}^{\frac{n}{2}}x_{2j-1})$
      \\ \hline
      $(0,1,\overline{0},1,0)$, $\frac{n}{2}$ odd &
      $F=\frac{1}{x_0}\tilde{\psi}(x_{n-1}\prod_{j=1}^{\frac{n-2}{4}}\frac{x_{4j-3}}{x_{4j-1}},x_0\prod_{j=1}^{\frac{n-2}{4}}\frac{x_{4j}}{x_{4j+2}})$
      \\ \hline
      \end{tabular}
\end{table}
\end{center}

\renewcommand{\arraystretch}{2}
\begin{center}
\begin{table}[ht]
  \caption{Poisson structures of odd dimension and the function $F$
  which defines the map that preserves the Poisson structure.
  }
  \label{tbl:table4}
  \begin{tabular}{ | c | c | c | }
    \hline
     $(T_1,T_2,\ldots,T_{n-1})$ & Rank & Form of the function $F$\\
     \hhline{|=|=|=|}
	  $(\overline{1})$ &
      $n-1$ &
      $F=\frac{1}{x_0}\tilde{\psi}(x_0\prod_{j=1}^{\frac{n-1}{2}}\frac{x_{2j}}{x_{2j-1}})$
      \\ \hline
      $(\overline{1,-1})$ &
      $n-1$ &
      $F=x_0\tilde{\psi}(\prod_{j=0}^{n-1}x_{j})$
      \\ \hline
      $(\overline{1,0})$ &
      $n-1$ &
      $F=\frac{1}{x_1}\tilde{\psi}(x_0x_{n-1})$
      \\ \hline
      $(1,\overline{0})$ &
      $n-1$ &
      $F=\prod_{j=1}^{\frac{n-1}{2}}\frac{1}{x_{2j-1}}\tilde{\psi}(\prod_{j=0}^{\frac{n-1}{2}}x_{2j})$
      \\ \hline
      \end{tabular}
\end{table}
\end{center}

\section{Poisson structures for known maps}
\label{sec:ex}

We now apply our results to several families of maps
and we find Poisson structures that they preserve. We also
find maps that preserve Poisson structures of specific form.
For simplicity we write $\LVPS({\bf t})$ (Lotka-Volterra
Poisson structure) for the quadratic Poisson structure
$\pb{x_i,x_j}=T_{i,j}x_ix_j$ where $T$ is a skew-symmetric
Toeplitz matrix and ${\bf t}=(T_1,\ldots,T_{n-1})$
with $T_{j-i}=T_{i,j}$ for all $0\leq i<j \leq n-1$.

First we consider maps which arise as reductions of the
AKP partial difference equation \cite{hirota,miwa}. These maps are defined by
an equation of the form
\begin{equation}
\label{eq:AKP_red}
A{\bf x}^{{\bf u}_0}x_{n}+B{\bf x}^{{\bf u}_1}+C{\bf x}^{{\bf u}_2}=0
\end{equation}
which is obtained from a $(z_1,z_2,z_3)$-travelling wave reduction of the AKP equation
\begin{equation}
\label{eq:AKP_part_diff}
A\mathcal{T}_{k+1,l,m}\mathcal{T}_{k,l+1,m+1}+B\mathcal{T}_{k,l+1,m}\mathcal{T}_{k+1,l,m}+C\mathcal{T}_{k,l,m+1}\mathcal{T}_{k+1,l+1,m}=0.
\end{equation}
We consider the $(z_1,z_2,z_3)$-travelling wave reduction
$\mathcal{T}_{k,l,m} = \tau_{z_1k+z_2l+z_3m}$ where $z_1, z_2, z_3\in\mathbb N$.
Because of the symmetry
of equation \eqref{eq:AKP_part_diff}, the order of $(z_1, z_2, z_3)$
is irrelevant and therefore we may use the constraint $0<z_1< z_2< z_3$.
By writing $\tau_n$ for $\tau_{z_1k+z_2l+z_3m+n}$
then the pullback of \eqref{eq:AKP_part_diff} under the transformation
$x_0=\frac{\tau_{-1}\tau_{1}}{\tau_0^2}$ is the map \eqref{eq:AKP_red} of
order $n=M(z_1,z_2,z_3)=z_2+z_3-z_1-2$. The vectors ${\bf u}^*_\ell$
in \eqref{eq:AKP_red} are symmetric and of dimension $n-1$.
For $z_3\geq z_1+z_2$ they are given by
\begin{gather*}
{\bf u}^*_0=(2,3,\ldots,z_2-1,\overline{z_2},z_2-1,\ldots,3,2)\\
{\bf u}^*_1=(\overline{0},1,2,\ldots,z_1-1,\overline{z_1},z_1-1, \ldots, 1, \overline{0}),\\
{\bf u}^*_2=(0,\ldots,0)\,,
\end{gather*}
where the total number of zeros in ${\bf u}^*_1$
is $2(z_2-z_1-1)$.
If $z_3<z_1+z_2$ the exponents ${\bf u}^*_\ell$ coincide with
the exponents of the $(z_3-z_2, z_3-z_1, z_3)$ reduction.
Their first elements are respectively $u_{0,0}=1$ and
$u_{\ell,0}=0$ for $\ell=1,2$.

Equation \eqref{eq:AKP_part_diff} is a special case of a
more general partial difference equation, the BKP equation
\cite{miwa}
\begin{gather}
\label{eq:BKP_discr}
\begin{split}
A\mathcal{T}_{k+1,l,m}\mathcal{T}_{k,l+1,m+1}+&B\mathcal{T}_{k,l+1,m}\mathcal{T}_{k+1,l,m+1}+C\mathcal{T}_{k,l,m+1}\mathcal{T}_{k+1,l+1,m}+\\
&D\mathcal{T}_{k,l,m}\mathcal{T}_{k+1,l+1,m+1}=0.
\end{split}
\end{gather}
The same $(z_1,z_2,z_3)$-travelling wave reduction as before gives rise to
the $n$-th order map with $n=N(z_1,z_2,z_3)=z_1+z_2+z_3-2$, given by
\begin{equation}
\label{eq:BKP_red}
D{\bf x}^{{\bf u}_0}x_n+A{\bf x}^{{\bf u}_1}+B{\bf x}^{{\bf u}_2}+
C{\bf x}^{{\bf u}_3}=0.
\end{equation}
For $z_3\geq z_1+z_2$ the vectors ${\bf u}^*_\ell$ are
\begin{gather*}
{\bf u}^*_0=(2,3,\ldots,z_1+z_2-1,\overline{z_1+z_2},z_1+z_2-1,\ldots,3,2),\\
{\bf u}^*_1=(\overline{0},1,2,\ldots, z_2-1,\overline{z_2},z_2-1,\ldots, 2, 1, \overline{0}),\\
{\bf u}^*_2=(\overline{0},1,2,\ldots, z_1-1,\overline{z_1}, z_1-1, \ldots, 2, 1, \overline{0}),\\
{\bf u}^*_3=(0,0,\ldots,0)
\end{gather*}
and for $z_3<z_1+z_2$ they are
\begin{gather*}
{\bf u}^*_0=(2,3,\ldots,z_3-1,z_3,z_3,\ldots,),\\
{\bf u}^*_1=(\overline{0}, 1, 2, \ldots, z_3-z_1-1, \overline{z_3-z_1}, z_3-z_1-1,\ldots, 2, 1, \overline{0}),\\
{\bf u}^*_2=(\overline{0}, 1 , 2, \ldots, z_3-z_2-1, \overline{z_3-z_2}, z_3-z_2-1, \ldots, 2, 1, \overline{0}),\\
{\bf u}^*_3=(0,0,\ldots,0)\,,
\end{gather*}
where, in both cases, the total number of zeros in ${\bf u}^*_1$ is $2(z_1-1)$
and in ${\bf u}^*_2$ is $2(z_2-1)$. Their first elements are respectively $u_{1,0}=1$
and $u_{\ell,0}=0$ for $\ell=1,2$.

The above equations \eqref{eq:AKP_red} and \eqref{eq:BKP_red}
are of the form \eqref{eq:second_form} with $\psi=z_1\tilde{\psi}(z_2,z_3,\ldots,z_k)$
and $k=2,3$ respectively. More explicitly the equation \eqref{eq:AKP_red} is written as
\begin{equation}
\label{eq1}
x_n=x_M={\bf x}^{{\bf u}_1-{\bf u}_0}(-\frac{B}{A}-\frac{C}{A}{\bf x}^{{\bf u}_2-{\bf u}_1})
\end{equation}
and the equation \eqref{eq:BKP_red} as
\begin{equation}
\label{eq2}
x_n=x_N={\bf x}^{{\bf u}_1-{\bf u}_0}(-\frac{A}{D}-\frac{B}{D}{\bf x}^{{\bf u}_2-{\bf u}_1}-\frac{C}{D}{\bf x}^{{\bf u}_3-{\bf u}_1})
\end{equation}
The vectors ${\bf r}_\ell$ are related to the vectors ${\bf u}_\ell$ by
${\bf r}_1={\bf u}_1-{\bf u}_0,\, {\bf r}_\ell={\bf u}_\ell-{\bf u}_1$.

Applying Theorem \ref{prop:pois_preserv} we get the following result
about the AKP reductions.
\begin{prop}
If $z_2+z_3-z_1-2$ is even then there is a quadratic Poisson structure
of the form \eqref{eq:lv_ps} that is preserved by the map \eqref{eq:AKP_red}.
\end{prop}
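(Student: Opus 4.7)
The plan is to apply Theorem \ref{prop:pois_preserv} directly, so I need to check its hypotheses for the AKP reduction written in the form \eqref{eq1}. Comparing \eqref{eq1} with \eqref{eq:second_form}, the defining function is
$$
F({\bf x})={\bf x}^{{\bf r}_1}\tilde{\psi}({\bf x}^{{\bf r}_2}),
\qquad {\bf r}_1={\bf u}_1-{\bf u}_0,\quad {\bf r}_2={\bf u}_2-{\bf u}_1,
$$
with $\tilde{\psi}(t)=-\tfrac{B}{A}-\tfrac{C}{A}\,t\in C^1(\mathbb R)$. So the structural hypothesis $F={\bf x}^{{\bf r}_1}\tilde{\psi}({\bf x}^{{\bf r}_2})$ of Theorem \ref{prop:pois_preserv} is met, and the order of the map is $n=z_2+z_3-z_1-2$, which is assumed to be even.

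Next I verify the three conditions $r_{1,0}=-1$, $r_{2,0}=0$, and ${\bf r}_\ell^*$ symmetric for $\ell=1,2$. In the regime $z_3\geq z_1+z_2$, the paper explicitly records $u_{0,0}=1$ and $u_{1,0}=u_{2,0}=0$, from which
$$
r_{1,0}=u_{1,0}-u_{0,0}=-1,\qquad r_{2,0}=u_{2,0}-u_{1,0}=0,
$$
and since each ${\bf u}_\ell^*$ is displayed as a palindromic (symmetric) vector, the differences ${\bf r}_1^*={\bf u}_1^*-{\bf u}_0^*$ and ${\bf r}_2^*={\bf u}_2^*-{\bf u}_1^*$ are symmetric as well.

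The remaining case $z_3<z_1+z_2$ is handled by the remark in the paper that the exponents then coincide with those of the $(z_3-z_2,\,z_3-z_1,\,z_3)$ reduction; since
$$
(z_3-z_1)+(z_3-z_2)\leq z_3 \iff z_1+z_2\geq z_3,
$$
this reparametrisation lands us in the first regime, while the order $z_2+z_3-z_1-2$ is invariant under it. Hence the symmetry conditions and the values of $r_{1,0},r_{2,0}$ continue to hold.

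With $n$ even and all conditions of Theorem \ref{prop:pois_preserv} verified, the theorem immediately produces a quadratic log-canonical Poisson structure $\{x_i,x_j\}=T_{i,j}x_ix_j$ of the form \eqref{eq:lv_ps} preserved by the AKP map, the matrix $T$ being the skew-symmetric Toeplitz matrix determined by the non-trivial solution of \eqref{eq:linear_system_2_inv_matrix}. There is no substantive obstacle here: the proof is really a check that the explicit exponent data given for the AKP reduction satisfies the symmetry and boundary conditions of Theorem \ref{prop:pois_preserv}, with the only minor subtlety being the case split on whether $z_3\geq z_1+z_2$.
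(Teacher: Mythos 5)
Your proof is correct and follows exactly the route the paper intends: the paper states this proposition as an immediate application of Theorem \ref{prop:pois_preserv} (with no written proof), and you have simply supplied the verification that the AKP exponent data satisfies its hypotheses, including the reduction of the case $z_3<z_1+z_2$ to the case $z_3\geq z_1+z_2$ via the $(z_3-z_2,z_3-z_1,z_3)$ reparametrisation.
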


We now look at some specific choices of $z_1, z_2$ and $z_3$.
\begin{prop}
\label{prop:LVPB_for_12n_even_order}
For each $n\in\mathbb N$ even with $n\geq 2$, the $n$-th order map \eqref{eq:AKP_red}
with $z_1=1, z_2=2$ and $z_3=n+1$ preserves the $\LVPS({\bf t})$ with
${\bf t}=(\overline{1,-1},1)$.
\end{prop}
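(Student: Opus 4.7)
My plan is to invoke Lemma \ref{lem:gen_form}(2), which reduces the claim to verifying two families of linear identities relating the exponent vectors ${\bf r}_1,{\bf r}_2$ that define the map to the Toeplitz data ${\bf t}=(\overline{1,-1},1)$. First I would specialize the general AKP exponent formulas displayed just above \eqref{eq1} to $z_1=1$, $z_2=2$, $z_3=n+1$. Since $z_3\ge z_1+z_2=3$, the ``$z_3\ge z_1+z_2$'' case applies. With $z_2=2$ the ascending segment $(2,3,\ldots,z_2-1)$ in ${\bf u}_0^*$ and its mirror are empty, yielding ${\bf u}_0=(1,\overline{2})$. With $z_1=1$ the analogous inner segments in ${\bf u}_1^*$ vanish and the padding count $2(z_2-z_1-1)=0$ kills the outer $\overline{0}$ blocks, so ${\bf u}_1=(0,\overline{1})$. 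Finally ${\bf u}_2=(\overline{0})$. Therefore
$$
{\bf r}_1 = {\bf u}_1-{\bf u}_0 = (\overline{-1}),\qquad {\bf r}_2={\bf u}_2-{\bf u}_1=(0,\overline{-1}).
$$

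Next I would unpack ${\bf t}$. Since $n$ is even and $n-1$ odd, the tail entry $1$ continues the alternation, so $T_j=(-1)^{j+1}$ for every $1\le j\le n-1$; extending by skew-symmetry gives $T_{-k}=-T_k$ and $T_0=0$, and the partial sums $\sigma_m:=T_1+\cdots+T_m$ equal $1$ for $m$ odd and $0$ for $m$ even. I would then verify the two systems in \eqref{eq:linear_system_2} directly. After reindexing $k=j-i$ and folding negative indices via skew-symmetry, the equation for ${\bf r}_2$ becomes
$$
\sum_{j=0}^{n-1}r_{2,j}T_{j-i}=-\sum_{k=1-i}^{n-1-i}T_k=-(\sigma_{n-1-i}-\sigma_{i-1})=0,
$$
because $(n-1-i)-(i-1)=n-2i$ is even, so both partial sums have identical parity. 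Using this identity, the equation for ${\bf r}_1$ collapses to
$$
\sum_{j=0}^{n-1}r_{1,j}T_{j-i}=-\sum_{k=-i}^{n-1-i}T_k=-T_{-i}=T_i,
$$
and $T_{n-i}=T_i$ since $(-1)^{n-i+1}=(-1)^{i+1}$ when $n$ is even.

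The one genuine obstacle is bookkeeping: correctly specializing the piecewise AKP exponent formulas in the degenerate regime $z_1=1$, $z_2=2$, where essentially every internal block collapses. Once the vectors ${\bf r}_\ell$ are in hand, Poisson preservation reduces to the two short parity checks above, each succeeding precisely because $n$ is even.
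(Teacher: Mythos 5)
Your proof is correct and takes essentially the same route as the paper's: both specialize the AKP exponent formulas at $z_1=1,\ z_2=2,\ z_3=n+1$ to get ${\bf r}_1=(\overline{-1})$, ${\bf r}_2=(0,\overline{-1})$ and then reduce the claim to the linear system of Lemma \ref{lem:gen_form}(2). The only difference is that the paper simply asserts that ${\bf t}=(1,\overline{-1,1})$ solves that system, whereas you carry out the parity verification explicitly (and correctly).
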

\begin{proof}
For these choices of $z_1, z_2$ and $z_3$ the map \eqref{eq:AKP_red} becomes
$$
Ax_0{\bf x}^{{\bf u}_0^*}x_n+B{\bf x}^{{\bf u}_1^*}+C=0
$$
with ${\bf u}_0^*=(\bar{2})$ and ${\bf u}_1^*=(\bar{1})$.
The vectors ${\bf r}_1, {\bf r}_2$ are respectively $(\overline{-1})$
and $(0, \overline{-1})$ and the solution of the linear system \eqref{eq:linear_system_2_inv_matrix}
is ${\bf t}=(T_1,T_2,\ldots,T_{n-1})=(1,\overline{-1,1})$.
\end{proof}
Note that the $\LVPS({\bf t})$ Poisson structure with ${\bf t}=(1,\overline{-1,1})$
is non-degenerate and the vector ${\bf t}$ is symmetric for any even $n$.
We show in the next proposition that this is the only family of the AKP reductions
that preserves a non-degenerate Poisson structure of the form \eqref{eq:lv_ps}
with symmetric vector ${\bf t}$. For the BKP
reductions we show that they cannot preserve a non-degenerate
Poisson structure of the form \eqref{eq:lv_ps}.

\begin{prop}\
\begin{enumerate}
\item The only AKP reductions \eqref{eq:AKP_red} that preserve a non-degenerate Poisson structure of the form
\eqref{eq:lv_ps} with ${\bf t}=(T_1,T_2,\ldots,T_{n-1})$ symmetric,
are those corresponding to $z_1=1, z_2=2$ and $z_3=n+1>2$
for $n$ even given in Proposition \ref{prop:LVPB_for_12n_even_order}.
\item For any choice of $z_1<z_2<z_3$, the BKP reduction \eqref{eq:BKP_red} does not
preserve a non-degenerate Poisson structure of the form \eqref{eq:lv_ps}.
\end{enumerate}
\end{prop}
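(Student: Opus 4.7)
The plan is to handle items (2) and (1) in turn with different strategies.

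For item (2) I would invoke Proposition \ref{prop:pois_preserv_inv}(3). Non-degeneracy of the skew-symmetric Toeplitz $T$ forces $n$ to be even, so $Q$ has maximal rank $m=n-1$ and the threshold $n-m+1=2$ is strictly smaller than the BKP value $k=3$. What remains is the linear independence of ${\bf r}_2={\bf u}_2-{\bf u}_1$ and ${\bf r}_3=-{\bf u}_1$ (using ${\bf u}_3={\bf 0}$): a relation $\alpha{\bf r}_2+\beta{\bf r}_3=0$ reduces to $\alpha{\bf u}_2=(\alpha+\beta){\bf u}_1$, so it suffices that ${\bf u}_1$ and ${\bf u}_2$ be independent, which I would deduce from their differing terminal-zero counts $2(z_1-1)$ and $2(z_2-1)$ (with the analogous counts in the regime $z_3<z_1+z_2$).

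For item (1) I would first reduce to the case $z_3\geq z_1+z_2$: in the opposite regime the exponents coincide with those of the $(z_3-z_2,z_3-z_1,z_3)$ reduction. Non-degeneracy again forces $n$ even. For the AKP reduction $r_{1,0}=-1$ and $r_{2,0}=0$, so the linear system \eqref{eq:linear_system_2_inv_matrix} reads
\begin{equation*}
T^*\,{\bf r}_1^{*t}=\hat{\bf t}-{\bf t},\qquad T^*\,{\bf r}_2^{*t}=0.
\end{equation*}
The symmetry hypothesis $\hat{\bf t}={\bf t}$ collapses the first equation to $T^*\,{\bf r}_1^{*t}=0$, so both ${\bf r}_1^{*t}$ and ${\bf r}_2^{*t}$ lie in $\ker T^*$. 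Since $T^*$ is skew-symmetric of odd order $n-1$ its rank is at most $n-2$, while removing one row and one column from the full-rank matrix $T$ cannot drop the rank by more than $2$; hence $\operatorname{rank}(T^*)=n-2$ and $\dim\ker T^*=1$. Both ${\bf r}_1^*={\bf u}_1^*-{\bf u}_0^*$ and ${\bf r}_2^*=-{\bf u}_1^*$ are nonzero (since $z_1<z_2$ gives different peaks and ${\bf u}_1^*\neq 0$), so they must be proportional: ${\bf u}_0^*=(1+\lambda){\bf u}_1^*$ for some scalar $\lambda$.

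The final step is a direct pyramid analysis. Since ${\bf u}_0^*$ has all entries in $\{2,\ldots,z_2\}$, and hence no zeros, the proportionality forces ${\bf u}_1^*$ to have no zero entries either; its terminal-zero count $2(z_2-z_1-1)$ must therefore vanish, giving $z_2=z_1+1$. Comparing the first two entries then forces $z_1=1$: the first entries ($1$ and $2$) fix $1+\lambda=2$, whereas for $z_1\geq 2$ the second entries ($2$ and $3$) would contradict the ratio $2$. Hence $z_1=1$, $z_2=2$, and $z_3=n+1$. The only mildly technical ingredient is the rank identity $\operatorname{rank}(T^*)=n-2$; I do not expect it to be a serious obstacle.
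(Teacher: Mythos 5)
Your argument is correct and follows essentially the same route as the paper: item (2) via Proposition \ref{prop:pois_preserv_inv}(3) together with the linear independence of ${\bf u}_1,{\bf u}_2$, and item (1) by showing that symmetry of ${\bf t}$ forces ${\bf r}_1^*$ and ${\bf r}_2^*$ into the one-dimensional kernel of $T^*$, so that ${\bf u}_0^*$ and ${\bf u}_1^*$ are proportional. You merely make explicit some details the paper leaves implicit, namely the computation $\operatorname{rank}(T^*)=n-2$ and the entrywise analysis showing that proportionality of ${\bf u}_0^*$ and ${\bf u}_1^*$ forces $z_1=1$, $z_2=2$.
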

\begin{proof}
The proof of item (2) is a consequence of item (3) of Proposition \ref{prop:pois_preserv_inv}
by noticing that the vectors ${\bf r}_2, {\bf r}_3$ (or equivalently the vectors ${\bf u}_1, {\bf u}_2$)
are linearly independent.

For the proof of item (1) notice that because $r_{0,1}=-1$ the solution of the
linear system \eqref{eq:linear_system_2_inv_matrix} (using the assumption that ${\bf t}$ is symmetric)
would imply that the vectors ${\bf r}^*_1, {\bf r}^*_2$ (or equivalently the vectors ${\bf u}_0, {\bf u}_1$)
are null vectors of the matrix $T^*$.
Assuming that $T$ is of full rank, the matrix $T^*$ is of co-rank $1$ and therefore
${\bf u}_0$ and ${\bf u}_1$ are linearly dependent. We can see from the explicit
formulas of ${\bf u}_0$ and ${\bf u}_1$ that they are linearly depended
if and only if $z_1=1,\, z_2=2$ and $z_3=n+1$ with $n$ even.
\end{proof}

For the AKP reduction
with $z_1=1, z_2=2$ and $z_3=n+1$ with $n$ odd we have a map of odd order for which the associated
linear system \eqref{eq:linear_system_2} does not have a non-trivial solution. This is because
\eqref{eq:linear_system_2} now has $n-1$ equations with the same number of variables and from Lemma
\ref{lem:simple_form_palindr} (item 3) we see that, for each $j=1,2,\ldots, n-1$,
there is an equation with exactly $j$ zeros.
Therefore it can be transformed into
a triangular homogeneous system with non-zero diagonal elements. As it turns out there is
a further reduction which gives rise to Poisson maps. These reductions are similar to the reductions
given in Proposition \ref{prop:reductions}. We first prove a more general result.

\begin{prop}
Let $M$ be the map \eqref{eq:map_gener}-\eqref{eq:def_of_M} with
$F=z_1\tilde{\psi}(z_2,z_3,\ldots,z_k), n$ odd, $r_{1,0}=-1, r_{\ell,0}=0$
for $\ell=2,3,\ldots, k$ and ${\bf r}^*_{\ell}$ symmetric for all $\ell=1,2,3,\ldots,k$.
Then the reduction $w_j=x_jx_{j+1}, \; j=0,1,\ldots, n-1$ of the map $M$ gives rise
to a map of order $n-1$, which is of the form \eqref{eq:second_form}.
\end{prop}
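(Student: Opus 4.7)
The plan is to verify that the reduction $\mathbf{x}\mapsto\mathbf{w}=(w_0,\ldots,w_{n-2})$ with $w_j=x_jx_{j+1}$ descends $M$ to a well-defined map of the form \eqref{eq:map_gener}-\eqref{eq:def_of_M} on the $w$-variables. The first $n-2$ components are immediate: under $M$ we have $w_j=x_jx_{j+1}\mapsto x_{j+1}x_{j+2}=w_{j+1}$ for $j=0,1,\ldots,n-3$. So the entire content of the proposition is the assertion that $w_{n-1}:=x_{n-1}x_n=x_{n-1}F(\mathbf{x})$ is expressible as a function of $w_0,\ldots,w_{n-2}$ alone, in a form compatible with \eqref{eq:second_form}.

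Next I would recast the expressibility as a lattice question. The map $\mathbf{x}\mapsto\mathbf{w}$ from $\mathbb{R}_{>0}^n$ to $\mathbb{R}_{>0}^{n-1}$ has one-dimensional fibers, generated by the scaling $x_i\mapsto \lambda^{(-1)^i}x_i$; consequently a Laurent monomial $\mathbf{x}^{\mathbf{v}}$ descends to a Laurent monomial in the $w_j$ precisely when the alternating sum $\sum_{i=0}^{n-1}(-1)^i v_i$ vanishes. Equivalently, the exponent lattice of the $w_j$ is the $\mathbb{Z}$-span of $\{e_j+e_{j+1}\}_{j=0}^{n-2}$, and for $n$ odd this is exactly the hyperplane cut out by the alternating sum, as a quick telescoping check shows. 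Since $F(\mathbf{x})=\mathbf{x}^{\mathbf{r}_1}\tilde\psi(\mathbf{x}^{\mathbf{r}_2},\ldots,\mathbf{x}^{\mathbf{r}_k})$, the only remaining task is to verify the vanishing of the alternating sums of the exponent vectors of $x_{n-1}\mathbf{x}^{\mathbf{r}_1}$ and of each $\mathbf{x}^{\mathbf{r}_\ell}$ with $\ell\geq 2$.

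For $\ell\geq 2$, the exponent vector is $(0,r_{\ell,1},\ldots,r_{\ell,n-1})$, so the alternating sum is $S_\ell:=\sum_{i=1}^{n-1}(-1)^i r_{\ell,i}$; substituting $i\mapsto n-i$ and using $r_{\ell,i}=r_{\ell,n-i}$ (symmetry of $\mathbf{r}_\ell^*$) together with $(-1)^{n-i}=-(-1)^i$ (valid since $n$ is odd) gives $S_\ell=-S_\ell$, hence $S_\ell=0$. For $\ell=1$, the factor $x_{n-1}$ bumps the last coordinate by $+1$ while $r_{1,0}=-1$ contributes $-1$ to the zeroth, so the alternating sum becomes $-1+(-1)^{n-1}+\sum_{i=1}^{n-1}(-1)^i r_{1,i}$; for $n$ odd the two boundary contributions cancel and the remaining palindromic sum vanishes by the identical argument. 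We conclude $x_{n-1}\mathbf{x}^{\mathbf{r}_1}=\mathbf{w}^{\tilde{\mathbf{r}}_1}$ and $\mathbf{x}^{\mathbf{r}_\ell}=\mathbf{w}^{\tilde{\mathbf{r}}_\ell}$ for uniquely determined vectors $\tilde{\mathbf{r}}_\ell\in\mathbb{Z}^{n-1}$, so the reduced difference equation reads
\[
w_{n-1}=\mathbf{w}^{\tilde{\mathbf{r}}_1}\,\tilde\psi\bigl(\mathbf{w}^{\tilde{\mathbf{r}}_2},\ldots,\mathbf{w}^{\tilde{\mathbf{r}}_k}\bigr),
\]
which is of the form \eqref{eq:second_form}, as desired. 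The main technical point, i.e. the expected obstacle, is identifying the correct invariance condition for descent and recognizing that both the oddness of $n$ and the symmetry of the $\mathbf{r}_\ell^*$ are essential: either hypothesis failing would leave a nonzero obstruction in the alternating sum.
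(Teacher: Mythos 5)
Your proof is correct and follows essentially the same route as the paper's: both reduce the claim to showing that the relevant exponent vectors lie in the lattice spanned by $\{e_j+e_{j+1}\}$, which is characterized by the vanishing of the alternating sum (the paper phrases this via the explicit telescoping identity $r_n=r_{n-1}-r_{n-2}+\ldots-r_2+r_1$), and both deduce that vanishing from the symmetry of the vectors ${\bf r}_\ell^*$ together with the oddness of $n$. Your handling of the boundary contributions coming from the extra factor $x_{n-1}$ and from $r_{1,0}=-1$ is in fact slightly more explicit than the paper's.
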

\begin{proof}
It is enough to show that under our hypotheses the equation
$x_{n-1}x_n=\frac{x_1x_{n-1}z_1}{x_1}\tilde{\psi}(z_2,z_3,\ldots,z_k)$
can be written in terms of the new variables $w_i$. For this,
it is enough to show that, when $n$ is even and the vector
${\bf r}=(r_1,r_2,\ldots,r_n)$ is symmetric, then
${\bf x}^{\bf r}=\prod_{i=1}^n x_i^{r_i}$ can be written in terms
of the $w_i$'s. This is possible if and only if
$$
\prod_{i=1}^n x_i^{r_i}=
w_1^{r_1}w_2^{r_2-r_1}w_3^{r_3-r_2+r_1}\cdots w_{n-1}^{r_{n-1}-r_{n-2}+\ldots-r_2+r_1}
$$
which is equivalent to $r_n=r_{n-1}-r_{n-2}+\ldots-r_2+r_1$.
This follows from the symmetry of the vector $\bf r$.
\end{proof}
Notice that the new exponents in the previous proposition remain symmetric.
%
Using the previous result and Theorem \ref{prop:pois_preserv} we get the following.
\begin{prop}
\label{prop:redu}
For $z_1,z_2,z_3$ such that $n=z_2+z_3-z_1-2$ is odd, the reduction $w_j=x_jx_{j+1}, \; j=0,1,\ldots, n-1$,
of the $n$-th order map \eqref{eq:AKP_red} gives rise to an $n-1$-st order map which preserves
a quadratic Poisson structure.
\end{prop}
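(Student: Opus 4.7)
The plan is to chain the two preceding general results. The previous proposition provides, under the hypotheses $r_{1,0}=-1$, $r_{\ell,0}=0$, and ${\bf r}^*_\ell$ symmetric, an order-$(n-1)$ reduction in the $w$-variables that is again of the form \eqref{eq:second_form}. Since $n$ is odd, $n-1$ is even, so Theorem \ref{prop:pois_preserv} can then be applied to the reduced map to produce the desired quadratic Poisson structure. Thus the proof reduces to two verifications: that the hypotheses of the previous proposition hold for the AKP map, and that the resulting reduced map still satisfies the hypotheses of Theorem \ref{prop:pois_preserv}.

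For the first verification, I would read the AKP recurrence off from \eqref{eq1}, so that ${\bf r}_1={\bf u}_1-{\bf u}_0$ and ${\bf r}_2={\bf u}_2-{\bf u}_1$. From the formulas displayed for ${\bf u}_0,{\bf u}_1,{\bf u}_2$ (in both cases $z_3\geq z_1+z_2$ and $z_3<z_1+z_2$), one sees immediately that $u_{0,0}=1$ and $u_{1,0}=u_{2,0}=0$, giving $r_{1,0}=-1$ and $r_{2,0}=0$. The vectors ${\bf u}^*_0,{\bf u}^*_1,{\bf u}^*_2$ are all symmetric by inspection, hence so are ${\bf r}^*_1$ and ${\bf r}^*_2$. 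The previous proposition therefore applies and outputs an $(n-1)$-th order map in the variables $w_0,\ldots,w_{n-2}$.

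For the second verification, I would write out $w_{n-2}'=x_{n-1}x_n=x_{n-1}F$ in terms of the $w_j$. Using the conversion $w_0^{a_0}w_1^{a_1}\cdots w_{n-2}^{a_{n-2}}=x_0^{a_0}x_1^{a_0+a_1}\cdots x_{n-1}^{a_{n-2}}$, one reads off that the new exponent vector $\tilde{\bf r}_1$ satisfies $\tilde{r}_{1,0}=r_{1,0}=-1$, and similarly $\tilde{r}_{2,0}=r_{2,0}=0$. The remark immediately after the previous proposition records that symmetry of ${\bf r}^*_\ell$ passes to symmetry of $\tilde{\bf r}^*_\ell$. Hence all the hypotheses of Theorem \ref{prop:pois_preserv} are met for the reduced map on the even-dimensional space, and the theorem produces the quadratic Poisson structure preserved by it.

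The hard part is not conceptual but combinatorial bookkeeping: one must check that the change of exponents under $w_j=x_jx_{j+1}$, governed by the consistency identity $r_n=r_{n-1}-r_{n-2}+\cdots-r_2+r_1$ that is equivalent to symmetry of the vector, is compatible with both the vanishing conditions on the zeroth entry and with the preservation of symmetry of the starred part. Since the previous proposition's proof already handles this identity and its remark explicitly records preservation of symmetry, this check amounts to an index-by-index comparison, after which the desired conclusion is an immediate corollary of Theorem \ref{prop:pois_preserv}.
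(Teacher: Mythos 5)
Your proposal is correct and follows exactly the route the paper intends: the paper gives no separate proof of this proposition beyond the sentence ``Using the previous result and Theorem \ref{prop:pois_preserv} we get the following,'' i.e.\ it chains the preceding reduction proposition (whose hypotheses $u_{0,0}=1$, $u_{\ell,0}=0$ and symmetry of the ${\bf u}^*_\ell$ are recorded earlier in Section \ref{sec:ex}) with Theorem \ref{prop:pois_preserv} applied to the resulting even-order map, just as you do. Your additional index checks (that $\tilde r_{1,0}=-1$, $\tilde r_{2,0}=0$ and symmetry survive the passage to the $w$-variables) are details the paper leaves implicit, and they are carried out correctly.
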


As a special case of the previous proposition we get the following.

\begin{cor}
For each $n\in\mathbb N$ odd with $n\geq 3$ the reduction $w_j=x_jx_{j+1}, \; j=0,1,\ldots, n-1$
of the $n$-th order map \eqref{eq:AKP_red} with $z_1=1, z_2=2$ and $z_3=n+1$
is an $n-1$-th order mapping which preserves the $\LVPS({\bf t})$ with
${\bf t}=(1,\bar{0})$.
\end{cor}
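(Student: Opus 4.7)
The plan is to combine Proposition \ref{prop:redu} (which already guarantees that the reduction produces an order $n-1$ map preserving some quadratic Poisson structure) with a direct calculation that pins the structure down to $\LVPS({\bf t})$ with ${\bf t}=(1,\overline{0})$.

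First, I would specialise the exponent data to $z_1=1,\,z_2=2,\,z_3=n+1$. Using the formulas for ${\bf u}_0^*,{\bf u}_1^*,{\bf u}_2^*$ in the $z_3\geq z_1+z_2$ case (valid here since $n+1\geq 3$), one reads off ${\bf u}_0^*=(\overline{2})$, ${\bf u}_1^*=(\overline{1})$, ${\bf u}_2^*=(\overline{0})$, together with $u_{0,0}=1$ and $u_{\ell,0}=0$ for $\ell=1,2$ (exactly the values used in the proof of Proposition \ref{prop:LVPB_for_12n_even_order}). Equation \eqref{eq:AKP_red} then takes the form
\begin{equation*}
A\,x_0\Bigl(\prod_{j=1}^{n-1}x_j\Bigr)^{\!2}x_n+B\prod_{j=1}^{n-1}x_j+C=0.
\end{equation*}

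Next, I would apply the reduction $w_j=x_jx_{j+1}$. Since $n$ is odd, $n-1$ is even, and the telescoping identity $x_{j+2}/x_j=w_{j+1}/w_j$ gives $x_{n-1}/x_0=(w_1w_3\cdots w_{n-2})/(w_0w_2\cdots w_{n-3})$, while the obvious pairing gives $\prod_{j=1}^{n-1}x_j=w_1w_3\cdots w_{n-2}$. Multiplying the displayed relation by $x_{n-1}/(A x_0\prod_j x_j^2)$ and solving for $w_{n-1}=x_{n-1}x_n$ yields
\begin{equation*}
w_{n-1}={\bf w}^{{\bf r}_1}\Bigl(-\tfrac{B}{A}-\tfrac{C}{A}\,{\bf w}^{{\bf r}_2}\Bigr),
\end{equation*}
with ${\bf r}_1=(-1,0,-1,0,\ldots,-1,0)$ and ${\bf r}_2=(0,-1,0,-1,\ldots,0,-1)$, both of length $n-1$. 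This exhibits the reduced map as one of the form \eqref{eq:map_gener}--\eqref{eq:def_of_M} with $F$ of type \eqref{eq:second_form}, of order $n-1$, as predicted by Proposition \ref{prop:redu}.

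Finally, I would verify that ${\bf t}=(1,\overline{0})$ (of length $n-2$) solves the system \eqref{eq:linear_system_2_inv_matrix} for this pair of exponents. Because $T_1=1$ and all other $T_p$ vanish, the $(n-1)\times(n-1)$ Toeplitz matrix $T$ is tridiagonal with $\pm 1$ off-diagonals, so each equation of the system collapses to the first difference $r_{\ell,i+1}-r_{\ell,i-1}$ on the left. The alternating parity pattern of ${\bf r}_1$ and ${\bf r}_2$ makes these differences vanish in every bulk position; the one nontrivial case is the final equation of the ${\bf r}_1$-system, where the missing boundary term contributes $-r_{1,n-3}=1$, matching the right-hand side $T_1=1$. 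The principal technical hurdle is the bookkeeping in the reduction step, namely keeping parities straight when rewriting $x_{n-1}/x_0$ and $\prod_{j=1}^{n-1}x_j$ in terms of the $w$'s; once this is done, both the identification of the exponent vectors and the verification of ${\bf t}$ follow directly from the tridiagonal shape of $T$.
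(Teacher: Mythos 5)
Your proposal is correct and follows essentially the same route as the paper: both compute the exponent vectors of the reduced map in the $w$ variables and then treat the linear system \eqref{eq:linear_system_2}. The only (harmless) difference is that the paper solves that system to show ${\bf t}=(1,\overline{0})$ is the essentially unique solution, whereas you substitute this candidate and verify it directly using the tridiagonal form of $T$; your reduction computation and the boundary check $-r_{1,n-3}=1=T_1$ are both accurate.
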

\begin{proof}
For the proof we only have to solve the associated linear system \eqref{eq:linear_system_2}.
The new map in the $w$ variables is given by
$$
Aw_0{\bf w}^{{\bf v}_0}w_{n-1}+B{\bf w}^{{\bf v}_1}+C=0
$$
with ${\bf v}_0=(\bar{1})$ and ${\bf v}_1=(\overline{1,0},1)$
and the system \eqref{eq:linear_system_2} becomes
\begin{gather*}
\sum_{j=i+1}^{n-i-2}(r_{1,i+j}-1)T_j-T_{n-i-1}=0, \; \; i=1,2,\ldots,\frac{n-1}{2}-1,\\
 \sum_{j=i+1}^{n-i-1}T_j=0, \; i=1,2,\ldots,\frac{n-1}{2}-1.
\end{gather*}
This is a linear system with $n-3$ equations and $n-2$ variables.
Its first column is zero and its rest $n-3\times n-3$ minor is
invertible since there is, for each $j=1,2,\ldots,n-3$, a row with
exactly $n-2-j$ zeros. It is not difficult to show that its solution
is indeed $T_1=1$ and $T_i=0$ for $i=2,3,\ldots,n-2$.
\end{proof}

The $\LVPS$ which is preserved by the previous reduction is
the Kac-van Moerbeke Poisson structure (see \cite{km})
which is of maximal rank for any $n$.

We now give some examples for the inverse problem: to find the maps given the
Poisson structures. We consider a family of Poisson structures that
appeared in \cite{evr_17}. Let us denote with ${\bf v}_n^{(k)}=(v_1,\ldots,v_{n-1})$
the vector with $v_i=1$ for $i=1,2,\ldots,n-k-1$ and $v_i=-1$ for $i=n-k,\ldots,n-1$.
It was shown in \cite{evr_17} that for any $n\geq3$ and $k\in\mathbb N$ with $2k+1\leq n$
the $\LVPS({\bf v}_n^{(k)})$ is of full rank when $n$ is even
and of co-rank $1$ when $n$ is odd. We give the form of the maps that preserve
the $\LVPS({\bf v}_n^{(k)})$ for the extreme cases $k=0$ and $2k+1=n$.

\begin{prop}
\label{prop:bogo_lv}
For $n\geq3$ the Poisson structure $\LVPS({\bf v}_n^{(k)})$ is preserved by maps
of the form \eqref{eq:map_gener}-\eqref{eq:def_of_M} with
$$
F({\bf x})={\bf x}^{{\bf r}_1}\tilde{\psi}({\bf x}^{{\bf r}_2}),
$$
where $\tilde{\psi}$ is any function in $C^1(\mathbb R)$ and the ${\bf r}_1, {\bf r}_2$
are given as follows.
\begin{enumerate}
\item For $k=0$ 
\begin{align*}
{\bf r}_1&=(-1,\overline{0}), \quad {\bf r}_2=(\overline{1,-1},1), \quad \text{if $n$ is odd}, \\
{\bf r}_1&=(-1,\overline{0}), \quad {\bf r}_2=(0,\overline{1,-1},1), \quad \text{if $n$ is even.}
\end{align*}
\item For $2k+1=n$
\begin{gather*}
{\bf r}_1=(1,\overline{0}), \quad {\bf r}_2=(\overline{1}).
\end{gather*}
\end{enumerate}
\end{prop}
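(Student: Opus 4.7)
\emph{Plan.} The form $F({\bf x}) = {\bf x}^{{\bf r}_1}\tilde{\psi}({\bf x}^{{\bf r}_2})$ is exactly the shape $F = z_1 \tilde{\psi}(z_2,\ldots,z_k)$ with $k=2$ handled by Lemma \ref{lem:gen_form}(2). So $M$ preserves $\LVPS({\bf v}_n^{(k)})$ if and only if ${\bf r}_1$ and ${\bf r}_2$ satisfy the linear system \eqref{eq:linear_system_2} with $(T_1,\ldots,T_{n-1}) = {\bf v}_n^{(k)}$. The task therefore reduces to a direct verification of this linear system in each of the two cases, and there is no need to guess the vectors since they are already displayed in the statement.

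\emph{Case $k=0$.} Here $T_j = +1$ for $1 \le j \le n-1$ and $T_{-j} = -1$, i.e.\ $T_m = \operatorname{sgn}(m)$ for $m \ne 0$. With ${\bf r}_1 = (-1,\overline{0})$ the first equation of \eqref{eq:linear_system_2} collapses to $-T_{-i} = T_{n-i}$, i.e.\ $T_i = T_{n-i}$, which holds trivially because every $T_\ell$ with $1\le \ell \le n-1$ equals $+1$. For the ${\bf r}_2$-equations I split the sum using the sign of $j-i$:
$$
\sum_{j=0}^{n-1} r_{2,j} T_{j-i} = -\sum_{j=0}^{i-1} r_{2,j} + \sum_{j=i+1}^{n-1} r_{2,j}.
$$
Both proposed ${\bf r}_2$'s (the alternating $(\overline{1,-1},1)$ for $n$ odd and its leading-zero-shifted variant $(0,\overline{1,-1},1)$ for $n$ even) are designed precisely so that every partial sum $\sum_{j=0}^{p} r_{2,j}$ takes values in $\{0,\pm 1\}$ and the two partial sums above are always equal. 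A quick parity check of $i$ in each subcase (the leading zero being exactly what aligns the pattern in the even-$n$ case so that it terminates in $+1$) finishes the verification.

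\emph{Case $2k+1=n$.} Now $n$ is odd and $T_j = +1$ for $1\le j\le (n-1)/2$, $T_j = -1$ for $(n+1)/2 \le j \le n-1$. With ${\bf r}_1 = (1,\overline{0})$ the first equation reads $T_{-i} = T_{n-i}$, i.e.\ $T_{n-i} = -T_i$, and this is the built-in antipodal symmetry of ${\bf v}_n^{(k)}$ since $i \mapsto n-i$ swaps $\{1,\ldots,(n-1)/2\}$ with $\{(n+1)/2,\ldots,n-1\}$. For ${\bf r}_2 = (\overline{1})$ the second equation becomes $\sum_{j=0}^{n-1} T_{j-i} = 0$. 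Setting $S_p := \sum_{m=1}^{p} T_m$ (with $S_0 = 0$) and using $T_{-m} = -T_m$, this reduces to $S_i = S_{n-1-i}$ for $1\le i\le n-1$. The sequence $\{S_p\}_{p=0}^{n-1}$ rises linearly from $0$ to $(n-1)/2$ for $0\le p\le (n-1)/2$ and then falls linearly back to $S_{n-1}=0$, so the symmetry $S_p = S_{n-1-p}$ is immediate. This completes the argument. The only bookkeeping issue is aligning the parities in the $k=0,\,n$ even sub-case; no deeper obstacle arises once Lemma \ref{lem:gen_form}(2) is in hand.
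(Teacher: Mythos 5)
Your proof is correct. The paper actually states Proposition \ref{prop:bogo_lv} without any proof, but your argument is exactly the verification the paper's machinery calls for: reduce to the linear system \eqref{eq:linear_system_2} via Lemma \ref{lem:gen_form}(2) and check it directly, and I confirm your computations (the partial-sum cancellation for the alternating ${\bf r}_2$ in the $k=0$ case, and the symmetry $S_p=S_{n-1-p}$ of the tent-shaped partial sums in the $2k+1=n$ case) all go through.
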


\section*{Acknowledgements}
This research has been supported by the Australian Research Council (ARC), by
the Australian Mathematical Sciences Institute (AMSI), and by La Trobe
University (LTU). JAGR would like to acknowledge the generous hospitality
of the Department of Mathematics at LTU.


\begin{thebibliography}{10}

\bibitem{abrmar}
R.~Abraham and J.~E. Marsden.
\newblock {\em Foundations of mechanics}.
\newblock Benjamin/Cummings Publishing Co., Inc., Advanced Book Program,
  Reading, Mass., 1978.
\newblock Second edition, revised and enlarged, with the assistance of Tudor
  Ra\c tiu and Richard Cushman.

\bibitem{ABS}
V.~E. Adler, A.~I. Bobenko, and Yu.~B. Suris.
\newblock Classification of integrable equations on quad-graphs. {T}he
  consistency approach.
\newblock {\em Comm. Math. Phys.}, 233(3):513--543, 2003.

\bibitem{brst}
M.~Bruschi, O.~Ragnisco, P.~M. Santini, and Gui~Zhang Tu.
\newblock Integrable symplectic maps.
\newblock {\em Phys. D}, 49(3):273--294, 1991.

\bibitem{papa_1}
H.~W. Capel, F.~W. Nijhoff, and V.~G. Papageorgiou.
\newblock Complete integrability of {L}agrangian mappings and lattices of
  {K}d{V} type.
\newblock {\em Phys. Lett. A}, 155(6-7):377--387, 1991.

\bibitem{evr_17}
P.~A. Damianou, C.~A. Evripidou, P.~Kassotakis, and P.~Vanhaecke.
\newblock Integrable reductions of the {B}ogoyavlenskij-{I}toh
  {L}otka-{V}olterra systems.
\newblock {\em J. Math. Phys.}, 58(3):032704, 17, 2017.

\bibitem{dufour}
J.-P. Dufour and N.~T. Zung.
\newblock {\em Poisson structures and their normal forms}, volume 242 of {\em
  Progress in Mathematics}.
\newblock Birkh\"{a}user Verlag, Basel, 2005.

\bibitem{emmrich}
C.~Emmrich and N.~Kutz.
\newblock Doubly discrete {L}agrangian systems related to the {H}irota and
  sine-{G}ordon equation.
\newblock {\em Phys. Lett. A}, 201(2-3):156--160, 1995.

\bibitem{faddeev}
L.~Faddeev and A.~Yu. Volkov.
\newblock Hirota equation as an example of an integrable symplectic map.
\newblock {\em Lett. Math. Phys.}, 32(2):125--135, 1994.

\bibitem{hone_1}
A.~P. Fordy and A.~Hone.
\newblock Symplectic maps from cluster algebras.
\newblock {\em SIGMA Symmetry Integrability Geom. Methods Appl.}, 7:Paper 091,
  12, 2011.

\bibitem{goldstein}
H.~Goldstein.
\newblock {\em Classical mechanics}.
\newblock Addison-Wesley Publishing Co., Reading, Mass., second edition, 1980.
\newblock Addison-Wesley Series in Physics.

\bibitem{toep2}
G.~Heinig and K.~Rost.
\newblock Fast algorithms for skewsymmetric {T}oeplitz matrices.
\newblock In {\em Toeplitz matrices and singular integral equations
  ({P}obershau, 2001)}, volume 135 of {\em Oper. Theory Adv. Appl.}, pages
  193--208. Birkh\"auser, Basel, 2002.

\bibitem{hirota}
Ryogo Hirota.
\newblock Discrete analogue of a generalized {T}oda equation.
\newblock {\em J. Phys. Soc. Japan}, 50(11):3785--3791, 1981.

\bibitem{hone_3}
A.~N.~W. Hone, T.~E. Kouloukas, and G.~R.~W. Quispel.
\newblock Some integrable maps and their {H}irota bilinear forms.
\newblock {\em J. Phys. A}, 51(4):044004, 30, 2018.

\bibitem{hone_2}
A.~N.~W. Hone, T.~E. Kouloukas, and C.~Ward.
\newblock On reductions of the {H}irota-{M}iwa equation.
\newblock {\em SIGMA Symmetry Integrability Geom. Methods Appl.}, 13:Paper No.
  057, 17, 2017.

\bibitem{dihn_4}
A.~N.~W. Hone, P.~H. van~der Kamp, G.~R.~W. Quispel, and D.~T. Tran.
\newblock Integrability of reductions of the discrete {K}orteweg-de {V}ries and
  potential {K}orteweg-de {V}ries equations.
\newblock {\em Proc. R. Soc. Lond. Ser. A Math. Phys. Eng. Sci.},
  469(2154):20120747, 23, 2013.

\bibitem{iatrou}
A.~Iatrou.
\newblock Higher dimensional integrable mappings.
\newblock {\em Phys. D}, 179(3-4):229--253, 2003.

\bibitem{toep1}
Kh.~D. Ikramov.
\newblock On the eigenvectors of {T}oeplitz matrices.
\newblock {\em Moscow Univ. Comput. Math. Cybernet.}, 39(2):72--75, 2015.
\newblock Translation of Vestnik Moskov. Univ. Ser. XV Vychisl. Mat. Kibernet.
  {{\bf{2}}015}, no. 2, 25--28.

\bibitem{km}
M.~Kac and Pierre van Moerbeke.
\newblock On an explicitly soluble system of nonlinear differential equations
  related to certain {T}oda lattices.
\newblock {\em Advances in Math.}, 16:160--169, 1975.

\bibitem{kouloukas_1}
T.~E. Kouloukas and V.~G. Papageorgiou.
\newblock Entwining {Y}ang-{B}axter maps and integrable lattices.
\newblock In {\em Algebra, geometry and mathematical physics}, volume~93 of
  {\em Banach Center Publ.}, pages 163--175. Polish Acad. Sci. Inst. Math.,
  Warsaw, 2011.

\bibitem{dihn_1}
T.~E. Kouloukas and D.~T. Tran.
\newblock Poisson structures for lifts and periodic reductions of integrable
  lattice equations.
\newblock {\em J. Phys. A}, 48(7):075202, 21, 2015.

\bibitem{polpoisson}
Camille Laurent-Gengoux, Anne Pichereau, and Pol Vanhaecke.
\newblock {\em Poisson structures}, volume 347 of {\em Grundlehren der
  Mathematischen Wissenschaften [Fundamental Principles of Mathematical
  Sciences]}.
\newblock Springer, Heidelberg, 2013.

\bibitem{miwa}
Tetsuji Miwa.
\newblock On {H}irota's difference equations.
\newblock {\em Proc. Japan Acad. Ser. A Math. Sci.}, 58(1):9--12, 1982.

\bibitem{papa_2}
G.~R.~W. Quispel, H.~W. Capel, V.~G. Papageorgiou, and F.~W. Nijhoff.
\newblock Integrable mappings derived from soliton equations.
\newblock {\em Physica A}, 173(1-2):243--266, 1991.

\bibitem{pdes}
Walter~A. Strauss.
\newblock {\em Partial differential equations: an introduction}.
\newblock John Wiley \& Sons, Ltd., Chichester, second edition, 2008.

\bibitem{dihn_3}
D.~T. Tran, P.~H. van~der Kamp, and G.~R.~W. Quispel.
\newblock Involutivity of integrals of sine-{G}ordon, modified {K}d{V} and
  potential {K}d{V} maps.
\newblock {\em J. Phys. A}, 44(29):295206, 13, 2011.

\bibitem{dihn_2}
D.~T. Tran, P.~H. van~der Kamp, and G.~R.~W. Quispel.
\newblock Poisson brackets of mappings obtained as {$(q,-p)$} reductions of
  lattice equations.
\newblock {\em Regul. Chaotic Dyn.}, 21(6):682--696, 2016.

\bibitem{ves}
A.~P. Veselov.
\newblock Integrable mappings.
\newblock {\em Uspekhi Mat. Nauk}, 46(5(281)):3--45, 190, 1991.

\end{thebibliography}



\end{document}